\numberwithin{equation}{section}
\numberwithin{figure}{section}
\newtheorem{problem}{Problem}
\newtheorem{theorem}{Theorem}
\newtheorem{lemma}{Lemma}
\newtheorem{definition}{Definition}
\newtheorem{corollary}{Corollary}
\theoremstyle{remark}
\newtheorem{remark}{Remark}
\DeclareMathAlphabet{\mathpzc}{OT1}{pzc}{m}{it}
\newcommand{\be}{\begin{equation}}
\newcommand{\ee}{\end{equation}}
\newcommand{\Sw}{{\rm Swap}}
\newcommand{\poly}{\operatorname{poly}}
\begin{document}

\title{Symmetry Protected Quantum Computation}

\author{M. H. Freedman}
\affiliation{Station Q, Microsoft Research, Santa Barbara, CA 93106-6105, USA}
\affiliation{Microsoft Quantum, Redmond, WA 98052, USA}

\author{M.~B.~Hastings}
\affiliation{Station Q, Microsoft Research, Santa Barbara, CA 93106-6105, USA}
\affiliation{Microsoft Quantum, Redmond, WA 98052, USA}

\author{M. Shokrian Zini}
\affiliation{Perimeter Institute for Theoretical Physics, Waterloo, ON N2L 2Y5, Canada}
\affiliation{Research Consultant, Microsoft}

\begin{abstract}
We consider a model of quantum computation using qubits where it is possible to measure whether a given pair are in a singlet (total spin $0$) or triplet (total spin $1$) state.
The physical motivation is that we can do these measurements in a way that is protected against revealing other information so long as all terms in the Hamiltonian are $SU(2)$-invariant.
We conjecture that this model is equivalent to BQP.  Towards this goal, we show:
(1) this model is capable of universal quantum computation with polylogarithmic overhead if it is supplemented by single qubit $X$ and $Z$ gates.
(2) Without any additional gates, it is at least as powerful as the weak model of ``permutational quantum computation" of Jordan \cite{M05,jordan2010permutational}.
(3) With postselection, the model is equivalent to PostBQP.
\end{abstract}
\maketitle

Imperfect physical gates are a major challenge for building a scalable quantum computer.  One possible way to overcome this challenge is to use error correcting codes to build high fidelity logical gates from the lower fidelity physical gates \cite{Gottesman_2010}.  Another approach is to use a topologically ordered state to store and manipulate quantum information, directly obtaining good logical gates \cite{Kitaev_2003}.  Here, we propose a third approach, to protect the operations by symmetries of the physical Hamiltonian.

In particular, we consider qubits encoded in quantum spins, and we assume that the Hamiltonian and any noise terms respect an $SU(2)$ symmetry acting on all the qubits simultaneously. We need a quick introduction into the representation theory of $SU(2)$. The irreducible representations of $SU(2)$ are indexed by a quantity $S\in\{0,1/2,1,3/2,\ldots\}$, called the spin.  The dimensionality of a representation of spin $S$ is $2S+1$.  The spin-$1/2$ representation has dimension $2$, so can be regarded as a qubit.  The tensor product of two spin-$1/2$ representations is the direct sum of a spin $0$ and spin $1$ representation. The singlet state of a pair of qubits is, up to an arbitrary phase, $\frac{1}{\sqrt{2}}(|01\rangle-|10\rangle)$, while states $\{|00\rangle, |11\rangle,\frac{1}{\sqrt{2}}(|01\rangle+|10\rangle)\}$ in the subspace orthogonal to the singlet state are called triplet.  The singlet state is anti-symmetric under exchange of qubits, while the triplet states are symmetric.

In the idealized case, a single qubit is not subject to any noise of any kind, since there are no terms that we can write down which are invariant under $SU(2)$ on that qubit. If we bring a pair of qubits together,  then the total spin (either $0$ or $1$) is the only $SU(2)$-invariant term on both qubits.  Thus, dephasing may happen between spin $0$ (singlet) and spin $1$ (triplet) states; however, it also becomes possible to {\it measure} this total spin.

Our focus in this paper is on the power of these singlet/triplet measurements in an idealized model.
To state our physical model: we assume that there are $n$ qubits for some $n$, initially in some state which is a tensor product of singlets.  We then allow arbitrary pairs of qubits to be selected and the total spin of those two qubits to be selected.
We use the notation $s/t$ to indicate this operation, which projectively measures a pair of qubits to be in either the singlet ($s$) or triplet ($t$) subspace.
From this physical model, we define a computational complexity class that we call STP, where a sequence of polynomially many such $s/t$ measurements is allowed. More precisely,
\begin{definition}
STP is the class of problems that can be correctly answered, with constant probability larger than $\frac{1}{2}$, using polynomially many $s/t$ measurements where the sequence of measurements to make is determined by a polynomial time classical algorithm based on the input string and on the previous measurement outcomes.  Similarly, the decision of whether the input string is accepted or rejected is also determined by a polynomial time classical algorithm based on the string and the measurement outcomes.
\end{definition}

As perhaps the ``smallest" physical example where this can be realized, imagine a single electron and a single proton far separated from each other.  Each has spin $1/2$, and (assuming no magnetic fields are present and assuming that spin-orbit coupling can be ignored which depends on the electric fields and on the orbital degrees of freedom) then these spins are not subject to any decoherence.  If the electron and proton are brought near each other, then there is a hyperfine coupling between these spins, and the ground state splits into singlet and triplet levels with slightly different energies which can in principle be measured.  One might then try to compute by having a large number of electrons and protons and occasionally bringing a single electron near a single proton and measuring the total spin.  This simple example is slightly different from the idealized model above, as some of the qubits are associated with electrons and some with protons, and pairs must have one electron and one proton.

More realistic examples include trapped ions where interaction between ions may be dependent on total spin, or spin blockade in quantum dots \cite{Johnson_2005}.

This paper considers the question: what is the computational power of STP? We believe the study of such a simple model, in addition to its symmetry protection, is of interest itself and we conjecture that STP is equivalent to BQP. Even if our conjecture is wrong and STP falls somewhere between P and BQP, the study of STP is still well-motivated as it is so distinct from other quantum computational models. While we do not show STP = BQP, we present several partial results\footnote{After this paper appeared, it was shown that STP=BQP \cite{rudolph2021relational}, building on \cite{Rudolph_2005}. Even given that STP=BQP, the constructions here are of interest as they may give a lower overhead way to implement BQP in practice using additional approximate gates.}.

We emphasize that this kind of physical protection by symmetry is distinct from the approach using
measurement based quantum computation in a symmetry-protected topological phase \cite{Nautrup_2015}.

First, in \cref{secxzst}, we show how to implement universality using $s/t$ measurements as well as certain single qubit Cliffords and Pauli measurements.  Our discussion will be in reverse order of the number of single qubit operations we allow: we will start with the full single qubit Clifford group and all Pauli measurements, and gradually reduce the number of single qubit operations required, until what is required is
$X,Z$ Clifford operations in addition to $s/t$ measurements.  A related previous result is that $s/t$ operations combined with a source of many copies of three different states whose Bloch vectors span is also universal \cite{Rudolph_2005}; \textit{s/t} operations are used to purify mixed states and build any pure state, which are then used to build cluster states. This purification result, impossible to do using Clifford operations, further motivates the study of STP. We refer to the introduction in \cite{Rudolph_2005} for further motivation behind the study of STP.

In \cref{secpqc}, we show that STP is at least as powerful as the model of ``weak permutational computing" \cite{jordan2010permutational} (it has been shown \cite{H18} that there is an efficient classical algorithm to compute amplitudes in this permutational computing model, but no classical algorithm is known to solve the sampling problem in this model).

In \cref{secgtm}, in line with our investigation of STP, we consider generalizations of STP, showing that allowing higher spin qudits does not increase the power of the model.

Finally, in \cref{postbqp}, we define a postselected version of STP, called PostSTP, and show that it is equivalent to PostBQP. This implies that efficient \textit{exact} sampling from an STP protocol is impossible assuming the polynomial hierarchy is infinite \cite[See ``Sidebar: The Polynomial Hierarchy and Post-Selection'']{harrow2017quantum}. An approximate efficient sampling, related to quantum supremacy experiments, is discussed in \cref{statsofampsappendix} (\cref{STSampleproblem}). Showing PostSTP = PostBQP does not imply STP = BQP, as there are other models, such as boson sampling and IQP which are not believed to be BQP and yet their postselection version equals PostBQP (see \cite{harrow2017quantum} for more on PostBQP and quantum supremacy experiments).

There are several interesting open questions, such as the obvious question of the hardness of sampling measurement outcomes for some sequence of $s/t$ measurements, with the sequence randomly chosen or chosen in some other way.

A final result that may be of independent interest is given in \cref{irrmeasure}, namely a relation between the ``irrationality measure" of an irrational angle and how many times one may need to rotate by that angle to approximate some desired target angle to some desired accuracy. Although we use this result in \cref{uc}, it is only to show an efficient approximation that can also be done with more elementary facts. Likely the result in \cref{irrmeasure} is well-known, but since we did not find it elsewhere we record it here.
\section{$X,Z$ and $s/t$ is Universal for Quantum Computation}
\label{secxzst}
We show that:
\begin{theorem}
\label{lemxzst}
Using $s/t$ measurements, and single qubit $X,Z$ unitaries,
we can approximate a gate set consisting of one and two qubit Clifford operations and Pauli measurements, as well as $T$ gates, with an expected overhead that is only polylogarithmic in the error in this approximation.
\end{theorem}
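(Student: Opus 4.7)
My plan is to simulate the standard universal gate set (Cliffords, $T$, and Pauli measurements) using only the initial singlets, $s/t$ measurements, and single-qubit $X, Z$ Paulis, through a sequence of constructive reductions. The starting primitive is a complete Bell-state measurement: since the four Bell states are related by $I \otimes P$ for $P \in \{I, X, Z, XZ\}$, a tree of at most three adaptive rounds of ``apply the appropriate Pauli on one side, measure $s/t$, branch on the outcome'' identifies the Bell class of any pair, and the residual Pauli frame is tracked using only $X, Z$. Each initial singlet is then a usable Bell pair.

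With Bell measurement and Bell pairs in hand, standard gate-teleportation techniques yield qubit teleportation, entanglement swapping, two-qubit Clifford gates via teleportation through Clifford-transformed resource states, and single-qubit Pauli measurements via Bell measurement of the target against an ancilla with Pauli pre-rotations. The resource states---pairs of Bell pairs acted on by the target Clifford---can be prepared inductively from singlets by short $s/t$ sequences interleaved with Pauli corrections. Single-qubit Cliffords such as $H$ and $S$ that appear inside these constructions are pushed outside via Pauli frame tracking and absorbed into the choice of Bell-basis convention, so that only $X, Z$ are ever physically applied.

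The non-Clifford $T$ gate is realised by magic-state teleportation. I would first give an explicit $\{X, Z, s/t\}$ protocol on a small cluster of singlets (three or four qubits) that prepares a raw magic state $|m\rangle \approx T|+\rangle$ with fidelity above the Bravyi--Kitaev distillation threshold. Standard $15 \to 1$ distillation then reduces the $T$-gate error to $\epsilon$ using $O(\operatorname{polylog}(1/\epsilon))$ raw magic states, giving the claimed polylogarithmic overhead. The $T$ gate itself is applied by Bell measurement between the distilled magic state and the target qubit, followed by a Clifford correction from the previous stage.

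The main obstacle is the raw magic-state preparation. The Clifford machinery is essentially algebraic, a direct consequence of Bell measurement in the spirit of measurement-based quantum computation; a non-Clifford resource, by contrast, must exploit the subtle interplay between the $SU(2)$-symmetric $s/t$ projector and the symmetry-breaking single-qubit Paulis. Designing a short protocol whose post-measurement state has non-trivial overlap with $T|+\rangle$, with expected $O(1)$ trial count, is the genuinely novel ingredient. A secondary concern is verifying that every repeat-until-success loop---for Bell-measurement branching, for preparing Clifford resources, and for magic-state preparation---terminates in expected $O(1)$ time, so that the overall polylog dependence on $1/\epsilon$ is preserved.
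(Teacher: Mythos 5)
Your high-level skeleton (Bell-class measurement via three adaptive $s/t$ rounds, then teleportation-based Cliffords, then magic-state distillation for $T$) matches the paper, but the proposal leaves out the ideas that actually make the $X,Z$-only reduction work, and one of your assertions is wrong. The most serious gap is single-qubit Pauli measurement. You say this follows by ``Bell measurement of the target against an ancilla with Pauli pre-rotations,'' but that presupposes an ancilla in a known state such as $|0\rangle$ or $|+\rangle$, and with only $X,Z$ and $s/t$ you cannot prepare any single-qubit state in a definite basis: every operation you have either commutes with global $X$ or is $SU(2)$-invariant, so nothing distinguishes $|0\rangle$ from $|1\rangle$. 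The paper's solution is to build up a large ``standard'' set $S$ of qubits mutually satisfying $ZZ=+1$, using a partial measurement primitive $\hat O_{ZZ}$ (derived from $s/t$ plus $X,Z$: measure $s/t$, apply $Z$, measure $s/t$ again, which measures $ZZ$ and also $XX$ only when the first $s/t$ was triplet) in a random-walk repeat-until-success loop; one then measures $Z$ on a new qubit by comparing against a standard. Absent this construction you have no way to read out computational-basis information or to prepare $|0\rangle$, $|+\rangle$, $|Y{=}{+}1\rangle$, and all the downstream steps you describe silently assume it.

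Two more gaps. First, your claim that $H$ and $S$ ``are pushed outside via Pauli frame tracking and absorbed into the choice of Bell-basis convention'' is false: $H$ and $S$ do not commute through CNOT as Pauli frame corrections do, and they must be constructed, not tracked away. The paper produces them explicitly by state injection from $Y{=}{+}1$ resource qubits, using $\mathrm{CNOT}$ and single-qubit Pauli measurements, exploiting the identity $(H\otimes H)\,\mathrm{CNOT}\,(H\otimes H)=\mathrm{CNOT}$ with control and target swapped to get $\exp(i\tfrac{\pi}{4}X)$ without applying any Hadamard. Second, you acknowledge that the raw magic state preparation is ``the genuinely novel ingredient'' but you do not supply it, and you miss the subtlety it entrains. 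The paper's preparation is concrete: project $|0\rangle\otimes|+\rangle$ onto the triplet subspace and measure $X$ on the second qubit, leaving $3|0\rangle+|1\rangle$, a pure non-stabilizer state at the irrational angle $\theta=\arctan(1/3)$. State injection with this state produces $\exp(\pm i\theta Y)$ with an uncontrolled $\pm$ sign, and --- unlike the standard $T$-via-$S$-correction case --- there is no twice-the-angle Clifford available to undo the wrong sign. The paper fixes this by deliberately doing a random walk in angle: repeatedly inject with random sign, and since $\theta/\pi$ is irrational the walk lands within $\epsilon$ of $\pi/8$ in expected time controlled by the irrationality measure of $\theta$, after which standard distillation gives the polylogarithmic overhead. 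Without an argument for the sign ambiguity, your magic-state step does not close.
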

\begin{remark}
The polylogarithmic overhead arises for a familiar reason: we will exactly implement Clifford gates and then implement $T$ gates by magic state distillation using as input approximate $T$ gates that we can form from the given operations.  The overhead scales polylogarithmically in the accuracy of the distilled $T$ gates.  The implementation of the circuit is probabilistic, as certain operations are repeated until they succeed, which is why the theorem refers to an {\it expected} overhead.
\end{remark}

\subsection{$s/t$ and Single Qubit Clifford and Pauli Measurement is Universal for Quantum Computing}
We first show that $s/t$ on arbitrary pairs of qubits, combined with arbitrary Cliffords on a single qubit (i.e., $X,Z,H,S$) and single qubit Pauli measurements, is universal for quantum computing.
In particular, we can approximate a gate set consisting of one and two qubit Cliffords operations and Pauli measurements, as well as $T$ gates, with an overhead that is only polylogarithmic in the error.

In subsequent sections, we reduce the number of single qubit operations required at the cost of making the construction more complicated.

We begin by showing that we can implement the full Clifford group, and then show universality.

\subsubsection{Implementing the Full Clifford Group}
\label{ifcg}
First, we prove:
\begin{lemma}[Bell basis measurement]
We can implement a four outcome projective measurement on two qubits $A,B$ in a Bell basis:
$$\frac{1}{\sqrt{2}}\Bigl(|01\rangle - |10\rangle\Bigr);\quad
\frac{1}{\sqrt{2}}\Bigl(|01\rangle + |10\rangle\Bigr);$$
$$\frac{1}{\sqrt{2}}\Bigl(|00\rangle + |11\rangle\Bigr);\quad
\frac{1}{\sqrt{2}}\Bigl(|00\rangle - |11\rangle\Bigr).$$
\end{lemma}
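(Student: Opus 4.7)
The plan is to realize the Bell measurement as a short adaptive sequence of $s/t$ measurements interleaved with single-qubit Pauli gates. The starting observation is that one $s/t$ measurement already separates the Bell basis into the singlet $|\Psi_-\rangle = \tfrac{1}{\sqrt{2}}(|01\rangle-|10\rangle)$ versus the three-dimensional triplet subspace spanned by $|\Psi_+\rangle = \tfrac{1}{\sqrt{2}}(|01\rangle+|10\rangle)$, $|\Phi_+\rangle = \tfrac{1}{\sqrt{2}}(|00\rangle+|11\rangle)$, $|\Phi_-\rangle = \tfrac{1}{\sqrt{2}}(|00\rangle-|11\rangle)$. So the real content is to distinguish these three triplet Bell states from each other using only $s/t$ plus single-qubit Cliffords.

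The key computation is that applying a single-qubit Pauli to one of the two qubits is a unitary permutation of the Bell basis up to phases. A direct check shows $I\otimes Z$ sends $|\Psi_-\rangle \leftrightarrow -|\Psi_+\rangle$ while permuting $|\Phi_+\rangle \leftrightarrow |\Phi_-\rangle$, and $I\otimes X$ sends $|\Psi_\pm\rangle \leftrightarrow |\Phi_\pm\rangle$. Thus $I\otimes Z$ pulls exactly one of the three triplet Bell states (namely $|\Psi_+\rangle$) into the singlet subspace, and $I\otimes X$ subsequently pulls exactly one of the remaining pair $\{|\Phi_+\rangle,|\Phi_-\rangle\}$ into the singlet.

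This gives a three-branch decision tree. First do $s/t$; outcome $s$ declares $|\Psi_-\rangle$. Otherwise apply $I\otimes Z$ and do $s/t$; outcome $s$ now declares $|\Psi_+\rangle$. Otherwise apply $I\otimes X$ and do a third $s/t$, whose two outcomes distinguish $|\Phi_+\rangle$ from $|\Phi_-\rangle$. To confirm this really implements a projective Bell measurement, I would expand an arbitrary input as $|\psi\rangle = \sum_i \alpha_i|b_i\rangle$ in the Bell basis and track the unnormalized amplitudes down the tree; each leaf $b_i$ occurs with total probability $|\alpha_i|^2$ and leaves the state equal to $|b_i\rangle$ up to a known product of the Paulis applied along the branch, which can be undone by one or two further single-qubit Cliffords to return the declared Bell state.

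I do not anticipate a genuine obstacle here, as the argument reduces to a short Bell-basis permutation computation plus tree bookkeeping. If anything, the subtlest point is just keeping track of the phases picked up under $I\otimes Z$ so that the final Pauli correction at each leaf exactly restores the declared Bell state, rather than a signed version of it.
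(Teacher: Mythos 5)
Your proposal is correct and is essentially the paper's own argument: one $s/t$ isolates the singlet, then two more $s/t$ measurements each preceded by a Pauli ($Z$ then $X$, on one qubit) peel off $|\Psi_+\rangle$ and then distinguish $|\Phi_+\rangle$ from $|\Phi_-\rangle$. The only cosmetic differences are that the paper applies the Paulis to qubit $A$ rather than $B$ (immaterial by the antisymmetry of the singlet), and it leaves the post-measurement Pauli correction implicit rather than spelling it out.
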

\begin{proof}
To do this, first apply an $s/t$ measurement.  If the outcome is $s$, then the qubits are in the first Bell state.  If the outcome is $t$, apply $Z_A$ and again measure $s/t$.  If the outcome is $s$, then the qubits were in the second Bell state.  If the outcome is $t$, apply $X_A$ and again measure $s/t$; if the outcome is $s$, then the qubits were in the third Bell state.  If the outcome is $t$, then the qubits were in the fourth Bell state.
\end{proof}
\begin{remark}
Since we can do Bell basis measurements, we can teleport: given a pair of qubits labeled $B,C$ in a Bell state, we can bring in an additional qubit labeled $A$ and measure $A,B$ in the Bell basis.  This teleports the state of $A$ to $C$, up to some correction on $C$ which is a single qubit Clifford.
\end{remark}
Further, we can teleport more than one qubit, and use this to perform Clifford operations if we can prepare appropriately entangled states. To prove that for the case of CNOT, consider a state $\psi_{CNOT}$ on four qubits $C,D,E,F$ obtained by taking $C,E$ in a singlet and $D,F$ in a singlet, and then applying a CNOT from $E$ to $F$.  We then bring in two extra qubits $A,B$, and measure $A,C$ in the Bell basis and measure $B,D$ in a Bell basis.  This teleports the state of $A,B$ to $E,F$ and then applies a CNOT, again up to some single qubit Cliffords on $E,F$.

So, if we can prepare this state $\psi_{CNOT}$, then we can perform CNOT operations.  Combined with the ability to perform single qubit Cliffords, this gives the full Clifford group.  It seems at this point that we are trying to get a ``free lunch": the most obvious way to prepare $\psi_{CNOT}$ is to perform a CNOT operation, which is precisely the operation we are trying to produce.

However, it is possible to prepare $\psi_{CNOT}$ using just $s/t$ and single qubit Cliffords, as we now show.  First we construct a double spin operation $O_{ZZ}$.
\begin{lemma}[$O_{ZZ}$]
We can produce an operation $O_{ZZ}$ which acts on a pair of qubits $A,B$ and projectively measures $Z_A Z_B$ and, if $Z_A Z_B=-1$, then it also measures $X_A X_B$.
\end{lemma}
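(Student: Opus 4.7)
The plan is to realize $O_{ZZ}$ by decomposing the four-dimensional two-qubit Hilbert space into the three sectors that the desired measurement should distinguish: the singlet $|\Psi^-\rangle$ (with $Z_AZ_B=-1,X_AX_B=-1$), the triplet state $|\Psi^+\rangle$ (with $Z_AZ_B=-1,X_AX_B=+1$), and the two-dimensional subspace $\mathrm{span}\{|\Phi^+\rangle,|\Phi^-\rangle\}$ (with $Z_AZ_B=+1$). Each sector will correspond to a distinct outcome pattern of a short sequence of $s/t$ measurements separated by $Z_A$.

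First I would perform an $s/t$ measurement on $A,B$. On outcome $s$, the state is projected onto $|\Psi^-\rangle$, which is already a simultaneous eigenstate of $Z_AZ_B$ and $X_AX_B$ with both eigenvalues $-1$, and we report this. On outcome $t$ the state lies in the triplet subspace, and we still have to separate the $Z_AZ_B=+1$ part from the $|\Psi^+\rangle$ component. The key observation is that $Z_A$ acts on the Bell basis by swapping $|\Phi^+\rangle\leftrightarrow|\Phi^-\rangle$ (both still triplet, still $Z_AZ_B=+1$) and by sending $|\Psi^+\rangle\mapsto|\Psi^-\rangle$ (triplet $\to$ singlet). In other words, conjugating an $s/t$ measurement by $Z_A$ restricted to the triplet subspace is exactly the projective measurement that asks ``is the state $|\Psi^+\rangle$?''

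So after a $t$ outcome, I would apply $Z_A$ and measure $s/t$ again. An $s$ outcome at this stage signals that the incoming triplet state had nontrivial overlap with $|\Psi^+\rangle$, identifying $Z_AZ_B=-1$ and $X_AX_B=+1$ for the pre-measurement state. The post-measurement state is $|\Psi^-\rangle$, so a final $Z_A$ brings it to $|\Psi^+\rangle$, the correct joint eigenstate. A $t$ outcome at this stage means the state is supported on $\mathrm{span}\{|\Phi^+\rangle,|\Phi^-\rangle\}$ after the $Z_A$; since $Z_A$ preserves this two-dimensional subspace coherently (it merely swaps the two basis vectors), applying $Z_A$ once more undoes the rotation and leaves the original state projected onto the $Z_AZ_B=+1$ eigenspace, as required.

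I expect no real obstacle: the only subtle point is making sure the construction is a bona fide projective measurement in the presence of entanglement with other qubits, but because $Z_A$ and $s/t$ are Kraus operators acting only on $A,B$ whose effects on each Bell state we have tracked exactly, the branching probabilities and post-measurement states tensor correctly with any environment state. Thus the three outcome branches $(s)$, $(t,s)$, $(t,t)$ of the above protocol, together with the listed single-qubit $Z_A$ corrections, realize $O_{ZZ}$ using only resources already available from the previous lemma and the $X,Z$ Clifford gates.
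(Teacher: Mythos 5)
Your protocol is essentially identical to the paper's: the paper defines $O_{ZZ}$ as the first two steps of its Bell-basis-measurement lemma (measure $s/t$, on $t$ apply $Z_A$ and measure $s/t$ again, on a second $t$ apply $Z_A$ to undo). You have simply re-derived that protocol from scratch by tracking the Bell basis explicitly, and you added the small extra care of applying a final $Z_A$ in the $(t,s)$ branch so that the post-measurement state is $\lvert\Psi^+\rangle$ rather than $\lvert\Psi^-\rangle$ — a correction the paper leaves implicit (and later sweeps into ``single qubit Clifford corrections'' in the CNOT lemma).
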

\begin{proof}
To do this, simply perform only the first two measurements of the protocol above to measure in the Bell basis: if either of the first two measurements are $s$, then $Z_A Z_B=-1$ and we have also measured $X_A X_B$.  If both measurements are $t$, then $Z_A Z_B=+1$ but no other information is revealed; we should then apply $Z_A$ to undo the first application of $Z_A$.
\end{proof}
\begin{corollary}[$\hat{O}_{ZZ}$]
Using the operation $O_{ZZ}$, it is possible to produce an operation $\hat O_{ZZ}$ which always measures $Z_A Z_B$ and, with probability $1/2$, also measures $X_A X_B$.
\end{corollary}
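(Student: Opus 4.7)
The plan is to convert the state-dependent probability with which $O_{ZZ}$ reveals $X_A X_B$ into a fixed $1/2$, independent of the input state, by classically randomising over a single-qubit Pauli that anticommutes with $Z_A Z_B$ but commutes with $X_A X_B$. The natural choice is $X_A$: sandwiching an invocation of $O_{ZZ}$ between two copies of $X_A$ swaps the two $Z_A Z_B$-branches of $O_{ZZ}$ while leaving the $X_A X_B$-eigenvalue that can be revealed on the ``singlet'' branch intact. The single-qubit Pauli $X_A$ is available by the hypotheses of this section.

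Concretely, the proposed protocol for $\hat O_{ZZ}$ is: flip a fair classical bit $c\in\{0,1\}$; apply $X_A^c$; call $O_{ZZ}$ on $(A,B)$, producing a raw $Z_A Z_B$-outcome $z$ and, when $z=-1$, a raw $X_A X_B$-outcome $x$; then apply $X_A^c$ once more. Report $Z_A Z_B=(-1)^c z$, and, whenever $z=-1$, also report $X_A X_B=x$. Using the identities $X_A P_\pm X_A = P_\mp$ with $P_\pm = (I \pm Z_A Z_B)/2$, together with $X_A (X_A X_B) X_A = X_A X_B$, one checks that for either value of $c$ the composite procedure realises a legitimate projective measurement on the original state $\rho$: if $z=-1$ the post-measurement state is the Bell state with the reported $Z_A Z_B, X_A X_B$ eigenvalues (up to an irrelevant global phase), and if $z=+1$ it is the projection of $\rho$ onto the reported $Z_A Z_B$-eigenspace.

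Finally, the event ``$X_A X_B$ is measured'' coincides with the event $z=-1$ at the $O_{ZZ}$ interface. Its probability is the average over $c$ of $\mathrm{Tr}\bigl((X_A^c \rho X_A^c)\, P_-\bigr)$, which equals $\tfrac12\mathrm{Tr}(\rho P_-) + \tfrac12\mathrm{Tr}(\rho P_+) = \tfrac12$ for every $\rho$. The only real obstacle is the bookkeeping: tracking how conjugation by $X_A$ transforms the intermediate projectors and the internal $Z_A$ correction inside $O_{ZZ}$ so that the end-to-end channel on $\rho$ matches a textbook projective measurement of $Z_A Z_B$ and, conditionally, of $X_A X_B$. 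All of this follows from the Pauli commutation identities above, so no substantive difficulty remains.
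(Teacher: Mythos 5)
Your construction is exactly the paper's: flip a fair coin, conditionally conjugate $O_{ZZ}$ by $X_A$, and reinterpret the raw $Z_AZ_B$-outcome accordingly. The extra projector bookkeeping you add (using $X_A P_\pm X_A = P_\mp$ and that $X_A$ commutes with $X_AX_B$) is a correct elaboration of the same idea, not a different route.
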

\begin{proof}
To do this, flip an unbiased coin: if it is heads, apply $O_{ZZ}$; if it is tails, apply $X_A$, then apply $O_{ZZ}$, then apply $X_A$ again.  Then, if $Z_AZ_B=1$ we measure $X_A X_B$ if the coin is tails but not if it is heads, while if $Z_A Z_B=-1$, we measure $X_A X_B$ if the coin is heads but not if it is tails.
\end{proof}
\begin{remark}[$O_{XX},O_{YY},\hat{O}_{XX},\hat{O}_{YY}$]
Similarly, we can produce an operation $O_{XX}$ which measures $X_A X_B$ and, if $X_A X_B=-1$, also measures $Z_A Z_B$, and we can produce an operation $\hat O_{XX}$ which measures $X_A X_B$ and, with probability $1/2$ measures $Z_A Z_B$ otherwise measuring nothing else.  To construct $O_{XX}$, we use the same construction as $O_{ZZ}$ except we apply $X_A$ after the first measurement if the result is $t$, rather than applying $Z_A$.
We construct $\hat O_{XX}$ similarly.   Also (we will use this in \cref{subsecxzst}), we can produce operations $O_{YY}$ and $\hat O_{YY}$.  To construct $O_{YY},\hat O_{YY}$, we apply $Y_A=X_A Z_A$ (up to scalar phase) after the first measurement if it is $t$.
\end{remark}

We will say that $O_{ZZ}$, $\hat O_{ZZ}$ {\it succeed} if they measure only $Z_A Z_B$ without measuring additional information.  Similarly, we say that $O_{XX},\hat O_{XX}$ (respectively, $O_{YY},\hat O_{YY}$) {\it succeed} if they measure only $X_A X_B$ (respectively, $Y_A Y_B$) without revealing any additional information.
\begin{lemma}[CNOT]
We can produce a CNOT gate using an ancilla, given the ability to perform single qubit Cliffords and Pauli measurements as well as to apply $O_{ZZ}, O_{XX}$ on an arbitrary pair of qubits.
\end{lemma}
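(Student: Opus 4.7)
The plan is to implement $\mathrm{CNOT}_{A\to B}$ via the standard measurement-based (``lattice-surgery'') recipe: with an ancilla $C$ initialized in $|+\rangle_C$, force the joint-parity outcomes $Z_AZ_C = +1$ and $X_CX_B = +1$, then measure $Z_C$, and apply an $X_B$ correction when that last outcome is $-1$. A short stabilizer calculation verifies that this yields $\mathrm{CNOT}_{A\to B}|\psi\rangle_{AB}$. All the single-qubit ingredients---producing $|+\rangle_C$ from a Pauli-$Z$ measurement and a Hadamard, the final $Z_C$ measurement, the $X_B$ correction---are directly available from the hypothesis, so the problem reduces to realizing each of the two joint-parity measurements with the forced $+1$ outcome from only $O_{ZZ}$, $O_{XX}$, single-qubit Cliffords, and single-qubit Pauli measurements.

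The obstacle is that both primitives leak extra information on their failure branches, completing a Bell-basis measurement on the operated pair that would destroy the computation. My fix is a Bell-pair protection trick. For the $Z_AZ_C$ step, prepare $C$ as one half of a fresh Bell pair $|\Phi^+\rangle_{CD}$ (built from an $s/t$ singlet and a single-qubit Clifford) and apply $O_{ZZ}$ on $(A,C)$. On the success branch ($Z_AZ_C = +1$), measure $X_D$ and apply a conditional $Z_A$; a short calculation shows that the resulting state on $(A,B,C)$ is exactly the ideal $Z_AZ_C = +1$ projection of $|\psi\rangle_{AB}|+\rangle_C$. On the failure branch, the induced Bell measurement on $(A,C)$, combined with the pair $(C,D)$, is by the teleportation identity a quantum teleportation of $A$ (carrying its entanglement with $B$) onto $D$ up to a single-qubit Pauli correction read off from the leaked $X_AX_C$ outcome; apply the correction, rename $D\to A$, discard the spent pair, and retry. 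Each attempt succeeds with probability $1/2$, giving constant expected overhead.

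The $X_CX_B$ step uses a symmetric trick with a fresh Bell pair $|\Phi^+\rangle_{B'B''}$: apply $O_{XX}$ on $(C,B')$ and, on success, finish with a Bell-basis measurement on $(B,B'')$ and the $Z_C$ measurement, absorbing the random outcomes into single-qubit Pauli corrections on $A$ and $B'$ (so the CNOT target is produced on $B'$ rather than $B$). A direct stabilizer calculation shows that the net transformation is $\mathrm{CNOT}$ acting on $|\psi\rangle$ with output on $(A,B')$; on the failure branch of $O_{XX}$, an appropriate Pauli correction returns the state on $(A,B,B'')$ to the pre-step state with $B''$ playing the role of $C$, enabling a ``rename and retry'' recovery. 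The step I expect to require the most care is verifying this last reduction---that the $B$-side Bell-pair protection cleanly converts the $O_{XX}$ failure branch into an equivalent pre-step state via a Pauli correction while leaving the $(A,C)$ correlation from step one undisturbed---but this is routine stabilizer algebra mirroring the $Z_AZ_C$ analysis by the $X\leftrightarrow Z$ symmetry.
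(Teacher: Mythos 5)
Your proof is correct, but it takes a genuinely different route from the paper's. The paper converts the whole problem to an \emph{offline} resource preparation: it prepares the four-qubit teleportation state $\psi_{CNOT}$ on expendable qubits by running the Xue--Karzig joint-parity circuit with $O_{ZZ},O_{XX}$ and simply discarding and restarting on any failure (each attempt succeeds with probability $1/4$), and then applies a single round of Bell-measurement gate teleportation to move $A,B$ through the completed $\psi_{CNOT}$. The key point is that in the offline approach the failure branches never touch the data qubits, so no recovery analysis is needed at all. You instead run the same joint-parity circuit \emph{online} on the data qubits and handle failure with a Bell-pair shielding trick: the ancilla qubit of each $O_{ZZ}$ or $O_{XX}$ is half of a fresh Bell pair, so that on the failure branch the unwanted Bell measurement becomes an ordinary teleportation of the data to the other half, after which a Pauli correction restores the pre-step state and you retry. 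This works — the commutation of the joint-parity projector with the Bell-measurement stabilizers $Z Z$ and $X X$ is what makes both the success-branch relabeling and the failure-branch rename-and-retry go through — and it achieves the same constant expected overhead. What the paper's approach buys is conceptual simplicity and a decomposition that is reused directly in the surrounding discussion (the state $\psi_{CNOT}$ and the offline/online dichotomy are explicitly invoked again); what your approach buys is an online protocol that never produces and then separately consumes a resource state, at the cost of more delicate stabilizer bookkeeping on each failure branch. Both rely on the same underlying ingredients (the Bell-basis measurement lemma and single-qubit Cliffords), and both are valid.
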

\begin{proof}
The circuit \cite{Xue_2013,Karzig_2017} is: prepare the ancilla in the $|+\rangle$ state, measure $ZZ$ on the source and ancilla, measure $XX$ on the ancilla and target, and finally measure $Z$ on the ancilla, and apply single qubit Clifford corrections if needed.  The original reference \cite{Xue_2013} writes the circuit with additional Hadamard gates so that all measurements are $Z$ or $ZZ$, but it conjugates to the circuit we give here. \newline
\indent Now, to produce the CNOT gate action, recall that we only need to prepare the state $\psi_{CNOT}$. We do so by a probabilistic protocol: prepare two entangled qubits.  Then, attempt to produce a CNOT gate by using operations $O_{ZZ}$ and $O_{XX}$ in place of the $ZZ$ and $XX$ measurements in the aforementioned protocol for a CNOT.  If both $O_{ZZ}$ and $O_{XX}$ succeed, we have produced the desired $\psi_{CNOT}$.  If one does not succeed, we may simply try again.
\end{proof}
\begin{remark}
Note that in this probabilistic protocol, indeed $O_{ZZ}$ and $O_{XX}$ will each succeed with probability $1/2$, without any need to use $\hat O_{ZZ}$ and $\hat O_{XX}$.
\end{remark}

Thus, this protocol to perform a CNOT gate can be understood simply as, first, we have a protocol that sometimes succeeds in performing the desired CNOT, and, second, by preparing entangled states ``offline". Offline operations use \textit{expendable} qubits where, without any loss in efficiency, we can start again the operations/measurements on new qubits if the desired result was not obtained. This is in contrast to \textit{online} operations applied on the \textit{data} (computational+ancilla) qubits. By teleporting through the entangled states, we can use this to perform CNOTs on data qubits by ``only using the CNOT when it will succeed".

\subsubsection{Universality}
\label{uc}
Now we show universality.  The construction here is {\it not} intended to be optimal in terms of minimizing overhead in any way.  It is simply intended to be a simple way to show universality by leveraging standard results.
\begin{lemma}
We can prepare (offline) a non-stabilizer pure state on a single qubit.
\end{lemma}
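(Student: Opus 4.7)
The plan is to produce an explicit single-qubit state whose Bloch vector lies off the Clifford octahedron. The key structural observation is that, although the singlet projector $P_s=\frac{1}{4}(1-X_AX_B-Y_AY_B-Z_AZ_B)$ is rank one and projects onto the (stabilizer) singlet, the complementary triplet projector $P_t=1-P_s$ is rank three, and a three-dimensional subspace of the two-qubit Hilbert space cannot be the joint $+1$ eigenspace of any set of commuting Paulis. So $P_t$ is not a stabilizer-preserving operation, and conditioning on the $t$ outcome of an $s/t$ measurement generically produces non-stabilizer amplitudes, due to the irrational renormalization factor $\sqrt{4/3}$.

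Concretely, I would proceed as follows. First, use the given singlets together with single-qubit Pauli measurements and Clifford corrections to prepare two fresh ancillas in $|0\rangle_A$ and $|+\rangle_B$: take one half of a singlet, measure $Z$ (respectively $X$) on the other half, and apply an $X$ (respectively $Z$) correction depending on the outcome. Second, apply an $s/t$ measurement to the pair $(A,B)$ and condition on the triplet outcome. Expanding $|0\rangle_A|+\rangle_B$ in the singlet/triplet basis using $|00\rangle=|t_+\rangle$ and $|01\rangle=\frac{1}{\sqrt{2}}\bigl(|t_0\rangle+|s\rangle\bigr)$, one computes that $t$ occurs with probability $3/4$ and the normalized post-measurement state is $\sqrt{2/3}\,|00\rangle+\frac{1}{\sqrt{6}}\bigl(|01\rangle+|10\rangle\bigr)$. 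Third, measure $Z_B$ and condition on outcome $+1$, which has conditional probability $5/6$; the remaining state on $A$ is, after normalization, $\tfrac{1}{\sqrt{5}}\bigl(2|0\rangle+|1\rangle\bigr)$, with Bloch vector $(4/5,0,3/5)$, which is manifestly not one of the six stabilizer points $\pm\hat x,\pm\hat y,\pm\hat z$.

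Because every qubit involved is an offline ancilla, failure at either conditioning step is harmless: we simply discard the ancillas and retry with fresh singlets. The per-attempt success probability is a constant of order $5/8$, so the expected overhead is $O(1)$ and the construction is genuinely offline. The arithmetic in the triplet projection and the $Z_B$ branch is routine; the only real content is the structural point in the first paragraph, namely that the rank-three triplet projection escapes the stabilizer formalism and injects the irrational factor $\sqrt{4/3}$ that puts the final Bloch vector off the octahedron. Once that is stated, the rest is a one-line computation.
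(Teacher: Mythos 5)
Your construction is essentially the paper's: prepare $|0\rangle_A\otimes|+\rangle_B$, project onto the triplet, then measure the second qubit to leave a non-stabilizer state on the first. The only deviation is that you measure $Z_B$ and must postselect on the $+1$ branch (retrying otherwise, since the $-1$ branch leaves the stabilizer state $|0\rangle_A$), giving $(2|0\rangle+|1\rangle)/\sqrt{5}$, whereas the paper measures $X_B$, for which the $|-\rangle$ outcome is fixed by a $Z$ correction so no retry is needed at that step, giving $(3|0\rangle+|1\rangle)/\sqrt{10}$; both states serve equally well, as both $\arctan(1/2)$ and $\arctan(1/3)$ are irrational multiples of $\pi$.
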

\begin{proof}
Prepare two qubits in states $|0\rangle$ and $|+\rangle$ using $Z,X$ measurements and applying $X,Z$ if the wrong outcome occurs.  Project into the triplet state (if instead they are in a singlet, re-prepare and try again).  The result is, up to normalization, $|00\rangle + \frac{1}{2}(|01\rangle +|10\rangle)$.  Measure the second qubit in the $X$ basis; we assume without loss of generality that the result is $|+\rangle$.  The result state on the first qubit is, up to normalization, $3 |0\rangle + |1\rangle$.  With normalization this is
$\cos(\theta) |0\rangle + \sin(\theta) |1\rangle$, where $\theta=\arctan(1/3)$ is an irrational angle \cite{mo}.
\end{proof}
Any such state $\cos(\theta) |0\rangle + \sin(\theta) |1\rangle$ 
is equal to $\exp(i \theta Y) |0\rangle$.  Using this as a resource for state injection produces a rotation by
$$\exp(\pm i \theta Y),$$
where the sign is chosen uniformly at random (our basis for state injection is a little different from the standard one since we rotate by $Y$ instead of by $Z$). This state injection can be done, e.g., as in \cite[Section III]{Bravyi_2005} by modifying the following appropriately using a Hadamard and phase gate:
\begin{align}\label{stateinjection}
    MZ_2 \cdot CNOT 
    (|\psi\rangle \otimes (|0\rangle + e^{i\theta} |1\rangle))  
\end{align}
$$ \to \Lambda(e^{\pm 2i\theta})|\psi\rangle $$
where $MZ_2$ is a $Z$-measurement on the second qubit, and $\Lambda(e^{\pm 2i\theta})$ the controlled $e^{\pm 2i\theta}$ rotation $\begin{pmatrix} 1 & 0 \\ 0 & e^{\pm 2i\theta} \end{pmatrix}$. Note that $XHS(\cos(\theta) |0\rangle + \sin(\theta) |1\rangle)  \propto \frac{1}{\sqrt{2}} (|0\rangle + e^{2i\theta} |1\rangle)$, with $H,S$ the Hadamard and phase gate. We shall address the issue of sign $\pm$ in $\Lambda(e^{\pm 2i\theta})$ shortly.

In the simplest applications of state injection, one imagines a situation where rotation by twice the angle is available as a primitive (for example, using state injection to produce $T$ gates and assume that one has $S$ gates available).  In that case, if the sign is not what one wants, one can recover with a rotation by twice the angle.  We do not have that option here.

We use a different approach.  Pick any desired target angle $\phi_{target}$, and any error $\epsilon>0$.  
Then, repeatedly apply state injection (like in \cref{stateinjection}) to a qubit $|\psi\rangle$ initialized in the $|0\rangle$ state, until the result is
$\cos(\phi) |0\rangle + \sin(\phi) |1\rangle$ where $\phi \equiv m\theta \pmod{2\pi}$ for some $m\in \mathbb{N}$ and $$|\phi-\phi_{target}|\leq \epsilon.$$
Here, when we apply state injection, we do not care whether the plus sign or minus sign is chosen.  The result is some random walk in angles and since $\theta$ is irrational, in expected time $O(\frac{1}{\epsilon^{\mu+1}})$, where $\mu$ is the irrationality measure of $\frac{\pi}{8}$ (\cref{irrmeasure}), we get $|\phi-\phi_{target}|\leq \epsilon$.

Choosing $\phi_{target}=\pi/8$, we can then produce states arbitrarily close to a magic state for a rotation by $\pi/8$, i.e., a magic state for a ``$T$" gate, where the quotes are because this is rotation by $Y$ rather than $Z$.  Choosing $\epsilon$ sufficiently small, one can then use this gate as input into any standard $T$ gate distillation protocol (like in \cref{stateinjection} but for $\theta \to \phi$) \cite[Section III]{Bravyi_2005} to obtain universality with only polylogarithmic overhead in the target error.

\subsection{Reducing the Single Qubit Operations Needed}
We now reduce the single qubit operations required.

\subsubsection{Avoiding Use of Hadamard and $S$ gates}
First, we note that it is possible to avoid using both Hadamard and $S$ gates.  Without these gates, our construction of \cref{ifcg} gives the subgroup of the Clifford group generated by single qubit $X$ and $Z$ and two qubit CNOT.  Call this subgroup ${\cal C}$.
We now show that we can generate the full Clifford group using just $s/t$, $X,Z$ and single qubit Pauli measurements.
\begin{lemma}
The Hadamard and phase gates can be derived by using gate distillation with the help of $CNOT$ and single qubit Pauli measurements.
\end{lemma}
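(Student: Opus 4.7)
The plan is to derive each of $S$ and $H$ by a ``gate distillation'' protocol in which a stabilizer resource state is prepared exactly from CNOT plus single-qubit Pauli measurements (using the ambient $X,Z$ gates already available), and the target Clifford gate is then teleported onto the data. I will derive $S$ first and use it as a subroutine when deriving $H$.

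For the $S$ gate, the resource is $|+i\rangle = \frac{1}{\sqrt{2}}(|0\rangle + i|1\rangle)$, which I prepare exactly by measuring $Y$ on a fresh qubit and applying $X$ if the outcome is $-1$. I then inject $S$ with the standard circuit: CNOT from the data qubit (control) into the $|+i\rangle$ ancilla (target), followed by a $Z$-basis measurement on the ancilla. If the outcome is $0$, the data qubit is left in $S|\psi\rangle$ up to a global phase; if the outcome is $1$, it is in $S^{\dagger}|\psi\rangle$, which I correct by applying $Z$, using $ZS^{\dagger}=S$. The whole procedure is exact and uses only the allowed primitives.

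For the $H$ gate, the resource is the two-qubit ``Hadamard-link'' stabilizer state $|\chi\rangle = (I\otimes H)|\Phi^+\rangle = \frac{1}{\sqrt{2}}(|0\rangle|+\rangle + |1\rangle|-\rangle)$, which is stabilized by $\langle X_1 Z_2,\, Z_1 X_2\rangle$. I prepare it from $|00\rangle$ by measuring its two stabilizers in sequence, applying the Pauli correction $Z_1$ (resp.\ $Z_2$) when the $X_1 Z_2$ (resp.\ $Z_1 X_2$) outcome is $-1$; one checks that each correction flips the sign of the just-measured stabilizer while preserving all previously-stabilized operators. To apply $H$ to a data qubit $|\psi\rangle$, I perform a Bell measurement between $|\psi\rangle$ and the first qubit of $|\chi\rangle$ (a Bell measurement is CNOT followed by one $X$- and one $Z$-measurement); a short computation shows that the second qubit of $|\chi\rangle$ is left in $PH|\psi\rangle$ for some Pauli $P\in\{I,X,Z,XZ\}$ determined by the outcome, which I then undo with a single-qubit Pauli.

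The main obstacle is realizing the non-destructive measurement of $X_1 Z_2$ (and $Z_1 X_2$) with the allowed primitives, because CNOT alone only mixes $X$'s with $X$'s and $Z$'s with $Z$'s and so cannot extract the $X\otimes Z$ product without leaking the individual factors. The resolution, and the reason $S$ must be derived first, is the conjugation chain $X_1 Z_2 \mapsto -Y_1 Y_2 \mapsto -X_1 X_2$ produced by applying $\mathrm{CNOT}_{1\to 2}$ followed by $S\otimes S$. Since $X_1 X_2$ is measured non-destructively by one more CNOT plus a single-qubit $X$-measurement, reversing these conjugations at the end gives the desired non-destructive measurement of $X_1 Z_2$; the same idea handles $Z_1 X_2$ by symmetry.
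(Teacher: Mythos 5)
Your derivation of $S$ is identical to the paper's: prepare the $Y{=}+1$ state by a single-qubit $Y$-measurement with an $X$ correction, then state-inject via $\mathrm{CNOT}$ and a $Z$-measurement, fixing the $S^{\dagger}$ branch with $Z=S^{2}$. For $H$, however, your route is genuinely different. The paper observes that $(H\otimes H)\,\mathrm{CNOT}\,(H\otimes H)$ equals $\mathrm{CNOT}$ with control and target interchanged, so by swapping control and target and replacing the final $Z$-measurement with an $X$-measurement one performs the state-injection circuit ``in the Hadamard basis'' without ever applying a Hadamard; fed the same $Y{=}+1$ resource, this yields $\exp(\pm i\pi X/4)=HS^{\pm}H$ directly, and $H$ is then obtained by combining $\exp(i\pi Z/4)$ and $\exp(i\pi X/4)$ (a standard generating set for the single-qubit Clifford group modulo phase). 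Notably, this route does not require $S$ before producing $HS^{\pm}H$. Your route instead builds the Choi-type resource $|\chi\rangle=(I\otimes H)|\Phi^{+}\rangle$ by stabilizer measurements of $X_{1}Z_{2}$ and $Z_{1}X_{2}$ on $|0\rangle|0\rangle$ with Pauli corrections (your accounting of which correction preserves which stabilizer checks out), and then teleports $H$ onto the data by a Bell measurement. Because $\mathrm{CNOT}$ cannot by itself map $X_{1}Z_{2}$ to a single-qubit Pauli, you use the already-derived $S$ to route through $X_{1}Z_{2}\mapsto -Y_{1}Y_{2}\mapsto -X_{1}X_{2}$, which is correct but introduces a dependency (you must have $S$ before $H$) that the paper's construction avoids. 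Both arguments are sound; the paper's trick is slicker and symmetric in its treatment of the two rotations, while yours fits more squarely into the familiar gate-teleportation/Choi-state paradigm and makes the role of the $|\chi\rangle$ resource explicit.
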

\begin{proof}
Since we can measure single qubit Paulis, prepare qubits in state $Y=+1$.  Using these $Y=+1$ qubits as a target for state injection for rotations by $Z$, we can produce the $S=\exp(i \frac{\pi}{4} Z)$ gate.
This state injection protocol requires only $X,Z$, and CNOT gates and single qubit Pauli measurements, so it requires only the subgroup ${\cal C}$ above.  Note that
since we have the single qubit $Z=S^2$ available, we can recover state injection if we instead produce $S^\dagger$ (\cref{Sdistillation}).
\begin{align}\label{Sdistillation}
    MZ_2 \cdot CNOT  (|\psi\rangle \otimes |Y=+1\rangle) \to S^{\pm} |\psi\rangle
\end{align}
We can also use the same $Y=+1$ state in state injection to produce rotation by $\exp(i \frac{\pi}{4} X)$ (\cref{HSHdistillation}).
To do this we use the usual state injection protocol, except we interchange the control and target on the CNOT gate in state injection, and we replace the final $Z$ measurement in state injection with an $X$ measurement.  The effect of this is to perform state injection in a Hadamard basis (even though no Hadamards are used!), since $(H \otimes H) {\rm CNOT} (H \otimes H)$ is equal to a CNOT gate with control and target interchanged.
\begin{align}\label{HSHdistillation}
    MX_2 \cdot CNOT (|Y=+1\rangle  \otimes |\psi\rangle) 
\end{align}
$$\to HS^{\pm}H |\psi\rangle$$
Combining rotations $\exp(i \frac{\pi}{4} Z), \exp(i \frac{\pi}{4} X)$, we can produce the full single qubit Clifford group, including $H$.
\end{proof}

\subsubsection{$X,Z$ and $s/t$ is Universal}
\label{subsecxzst}
Let us now show finish the proof of \cref{lemxzst} by showing:
\begin{lemma}
Single qubit Pauli measurements can be derived using $X,Z$ and $s/t$.
\end{lemma}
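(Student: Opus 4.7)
The plan is to derive the single-qubit $Z_A$ measurement (with the $X_A$ and $Y_A$ cases then following by substituting $O_{XX}$ or $O_{YY}$ for $O_{ZZ}$ in the construction), using only the tools already built from $X,Z$, and $s/t$: Bell-basis measurement, $O_{ZZ}$, $O_{XX}$, $O_{YY}$, and the ability to convert a singlet into any Bell state by a single-qubit Pauli. The fundamental difficulty is that, starting from singlets, all of our two-body Pauli-like couplings (in particular full Bell-basis measurements, which project onto joint eigenspaces of $Z_A Z_B$ and $X_A X_B$) only generate stabilizers of weight $\ge 2$, so the naive gadget ``prepare $|0\rangle$ on $B$ and measure $O_{ZZ}$ on $(A,B)$'' is not available: no ancilla qubit can be prepared in a single-qubit eigenstate. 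What saves the construction is that $s/t$ is strictly coarser than Bell measurement — the triplet outcome leaves the three-dimensional triplet subspace unresolved — so that following an ``$s/t$-triplet'' by further $s/t$ and $X,Z$ operations can move us out of the stabilizer picture and let us extract single-qubit information from joint statistics.

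Concretely, I would prepare an ancilla pair $(B,C)$ in the Bell state $-\Phi^-_{BC} = X_B\,\Psi^-_{BC}$, whose classical invariants $Z_B Z_C = +1$ and $X_B X_C = -1$ are known while $Z_B$ and $Z_C$ individually are random, and use it as a coherent two-qubit reference. The $Z_A$ measurement is then implemented by performing $O_{ZZ}$ on $(A,B)$ to obtain $m_1 = Z_A Z_B$, followed by a further two-body operation — for example a Bell-basis measurement on $(B,C)$, or an $O_{ZZ}$ on $(A,C)$ — whose outcome, processed together with $m_1$ and the known Bell correlation $Z_B Z_C = +1$, determines $Z_A$ classically. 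A conditional application of $X_A$ or $Z_A$ (dictated by the outcomes) then restores the correct $Z$-eigenstate as the post-measurement state on $A$. When a single round does not cleanly collapse $A$, the protocol is iterated on fresh ancilla singlets in the probabilistic repeat-until-success style already used to prepare $\psi_{\mathrm{CNOT}}$ in \cref{ifcg}, exploiting the fact that offline ancilla preparation is essentially free.

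The main obstacle in executing this plan is showing that the composite measurement genuinely projects $A$ onto a $Z$-eigenstate with Born probabilities $|\langle b|\psi\rangle|^2$, rather than producing a coarser POVM with mixed post-measurement states. Verifying this requires an explicit calculation on the three-qubit state $|\psi\rangle_A \otimes (-\Phi^-_{BC})$, showing that the ancilla randomness — in particular the unknown value of $Z_B$ — factors out cleanly from $A$'s reduced state once the joint outcomes are classically combined. Since the overall theorem only asks for an approximation with polylogarithmic overhead in the error, an amplification using $O(\mathrm{polylog}(1/\epsilon))$ fresh ancilla singlets, driving any residual mixing in the post-measurement state below any target $\epsilon$, would also suffice; most of the technical work in the proof is likely to be in this amplification analysis.
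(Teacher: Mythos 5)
The core difficulty you correctly identify (you cannot prepare a single-qubit $Z$-eigenstate, so there is no ancilla $|0\rangle$ to compare against) is real, but your proposed workaround fails, and for a reason that goes to the heart of the model. All of $X$, $Z$, and $s/t$ commute, up to unobservable phase, with a global flip $X^{\otimes n}$ on every qubit, and the initial all-singlets state is invariant under it. Any protocol built from these primitives therefore cannot measure the \emph{absolute} value of $Z_A$: at best it can measure $Z_A$ relative to some reference whose own $Z$ label is a gauge choice. Your reference is a coherent Bell pair $(B,C)$ with $Z_B Z_C = +1$, $X_B X_C = -1$; but such a pair has no definite $Z_B$, and measuring $Z_A Z_B$ against it does not collapse $A$ to a $Z$-eigenstate — it entangles $A$ with $C$. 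Concretely, with $|\psi\rangle_A = \alpha|0\rangle + \beta|1\rangle$ and $(B,C)$ in $\tfrac{1}{\sqrt 2}(|00\rangle - |11\rangle)$, the outcome $Z_A Z_B = +1$ leaves $\alpha|000\rangle - \beta|111\rangle$ (and $\pm1$ each occur with probability $\tfrac12$ regardless of $\alpha,\beta$, so nothing about $Z_A$ has been learned). Following up with a Bell measurement on $(B,C)$ then gives $\pm1$ each with probability $\tfrac12$ and leaves $A$ in $|\psi\rangle$ or $Z|\psi\rangle$: you have implemented a trivial teleportation of $A$ back to itself, not a measurement. Following up instead with $O_{ZZ}$ on $(A,C)$ returns a deterministic outcome (always equal to the first), again learning nothing. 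No choice of subsequent two-body $Z\!Z$/$X\!X$/Bell measurements on $\{A,B,C\}$ can fix this, by the $X^{\otimes n}$-symmetry argument; and the ``amplification over fresh ancilla singlets'' you appeal to cannot help, because each fresh singlet is again $X$-symmetric.

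What the paper does instead is build a large set $S$ of qubits pairwise satisfying $ZZ = +1$, by repeatedly using $\hat O_{ZZ}$ in a random-walk fashion (discarding a qubit from $S$ whenever $\hat O_{ZZ}$ leaks extra $XX$ information). Crucially, once one qubit is discarded, the remaining qubits in $S$ are in an \emph{incoherent} mixture of $|0\rangle^{\otimes(n-1)}$ and $|1\rangle^{\otimes(n-1)}$ — not a GHZ or Bell-like coherent superposition. Measuring $Z_q Z_r$ against such a standard $r \in S$ genuinely projects $q$ onto a $Z$-eigenstate with the right Born probabilities, in the gauge where $S$ defines ``$|0\rangle$''. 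The residual global ambiguity ($S$ might ``really'' be all $|1\rangle$'s) is then dismissed precisely by noting the $X^{\otimes n}$-symmetry of the operations: the labels $0$ and $1$ are interchangeable, so the gauge ambiguity is unobservable. This is the idea missing from your proposal: you need an incoherent reference (obtained for free by tracing out qubits discarded during the $\hat O_{ZZ}$ random walk), not a coherent Bell pair, and you must accept that the measurement is only defined up to a global relabeling. The paper's construction is also exact — once a standard exists, each Pauli measurement is a genuine projective measurement with no residual error, so there is no amplification step to analyze.
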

\begin{proof} Recall that using just $X,Z$ and $s/t$, we have a protocol  $\hat O_{ZZ}$ which measures $ZZ$ on any two qubits $A,B$ and which measures nothing else with probability $1/2$.
So, imagine that we have a set $S$ of qubits such that the product $ZZ=+1$ for all pairs of qubits in $S$.  Consider another qubit $q$ in any state.  Then, pick any qubit $r$ from $S$ and measure $Z_q Z_r$.  With probability $1/2$ nothing else is measured and if $Z_q Z_r=+1$, we can add $q$ to the set; if $Z_q Z_r=-1$, we can apply $X$ to $q$ and then add $q$ to the set.  On the other hand, with probability $1/2$, additional information is measured; in this case, we remove $r$ from $S$.

So, with probability $1/2$, $|S|\rightarrow |S+1|$ and with probability $1/2$, $|S| \rightarrow |S-1|$.
This allows us to build up large sets $S$.  Since the size of $S$ does an unbiased random walk, it takes $\sim n^2$ operation to produce a set $S$ with $|S|=n$.  We can then use such a set as a ``standard": throw out any qubit in $S$.  The remaining qubits are then in a state which is an incoherent mixture of $|0\rangle^{\otimes (n-1)}$ and $|1\rangle^{\otimes (n-1)}$.  

So, let us analyze these two cases, where the remaining qubits are either in $|0\rangle^{\otimes (n-1)}$ or $|1\rangle^{\otimes (n-1)}$, separately.
First suppose that the state is 
$|0\rangle^{\otimes (n-1)}$.  Call each of these $n-1$ qubits ``$Z$ standards".
Then, any time we want to measure $Z$ on a single qubit, we get a $Z$ standard and use $\hat{O}_{ZZ}$ to measure $ZZ$ on the given qubit and the standard.  If we measure no other information other than $ZZ$, we can now in fact use both qubits as standards; if we also measure $XX$, we discard both qubits. 
So, at most one $Z$ standard is consumed per measurement. 
Note that even if $\hat{O}_{ZZ}$ does not succeed, so that we also measure $XX$, we still have learned the value of $Z$ on the given qubit: the effect is to measure the value of $Z$ on the given qubit and then bring that qubit and the standard into a Bell pair.

We assumed that the standards were in the state
$|0\rangle^{\otimes (n-1)}$.  If instead they are in the state
$|1\rangle^{\otimes (n-1)}$, nothing changes: our labels $|0\rangle$ and $|1\rangle$ are arbitrary and can be interchanged.  This is a consequence of a symmetry of our operations $X,Z,s/t$, which are invariant (up to an unobservable phase) under conjugation by $X$.  Indeed, if our labels were not arbitrary, then some sequence would reveal the difference between these two states, in which case we could, assuming we had 
$|1\rangle^{\otimes (n-1)}$, apply $X$ to every qubit to obtain $|0\rangle^{\otimes (n-1)}$.

So, we can measure single qubit $Z$ with a quadratic overhead: the number of operations is proportional to the square of the number of measurements.  Similarly, we can measure single qubit $X$ and single qubit $Y$ by preparing $X$ standards (respectively, $Y$ standards), which are sets of qubits where $XX=+1$ (respectively, $YY=+1$) for every pair in the set. This finishes the proof of the lemma and as a result \cref{lemxzst}.
\end{proof}
\begin{remark}
In fact, the quadratic overhead can easily be reduced to linear.  Suppose we have some such set $S$.  In $O(1)$ operations we can prepare another set $T$ with $|T|\geq 2$ such that all pairs of qubits in $T$ also have $ZZ=+1$.
Then, pick one qubit from $S$ and one from $T$ and apply $\hat O_{ZZ}$.  If this succeeds with $ZZ=+1$, add {\it all} qubits from $T$ to $S$ and if it succeeds with $ZZ=-1$, apply $X$ to all qubits in $T$ and then add all qubits in $T$ to $S$.
If it fails, remove the given qubit from $S$ and discard $T$.
Thus, with probability $1/2$ with have $|S|\rightarrow |S|+|T|\geq |S|+2$, while with probability $1/2$ we have $|S| \rightarrow |S|-1$.

This gives a biased random walk in $|S|$ and so $|S|$ increases linearly in the number of operations with high probability.  We leave it to the reader to consider optimizing the linear increase.  For example, what is the best $|T|$ to use; should one in fact apply this construction recursively by constructing $T$ using a similar process; one can in fact avoid discarding all of $T$ but only discard the measured qubit if $\hat O_{ZZ}$ does not succeed and so on. 
\end{remark}
\section{Permutational Quantum Computation}
\label{secpqc}
The model of permutational quantum computing \cite{jordan2010permutational} is as follows.

For any binary tree $T$ with $n$ leaves, we define a set of commuting operators on a system of $n$ qubits.
Each qubit corresponds to one leaf. For every vertex $v$ (including the root), there is an operator $\vec{S}_v^2 = (\sum_{l} \vec{S}_l)^2$ with eigenvalues corresponding to the total spin of the qubits corresponding to leaves which are descendants of that vertex. Additionally, for the root, there is another operator with eigenvalues corresponding to the total spin in the $Z$-direction of the qubits, denoted $S_Z = \frac{1}{2} \sum_{l \text{ leaf}} Z_l$.

The eigenvalues of these operators (all together forming a tuple of half-integers) define a complete eigenbasis. We say that a labelled tree is a binary tree with labels at each vertex (\cref{fig:labelledtree}), with the labels at each vertex being chosen from the set of eigenvalues of the operator(s) corresponding to that vertex.
We use $T,T',\ldots$ to denote unlabelled binary trees (\cref{fig:unlabelledtree}).  We use $\lambda,\lambda',\ldots$ to denote labeled binary trees, and use $|\lambda\rangle,\ldots$ to denote the corresponding states.
\begin{figure}[h]
    \centering
    \includegraphics[width=0.45\textwidth]{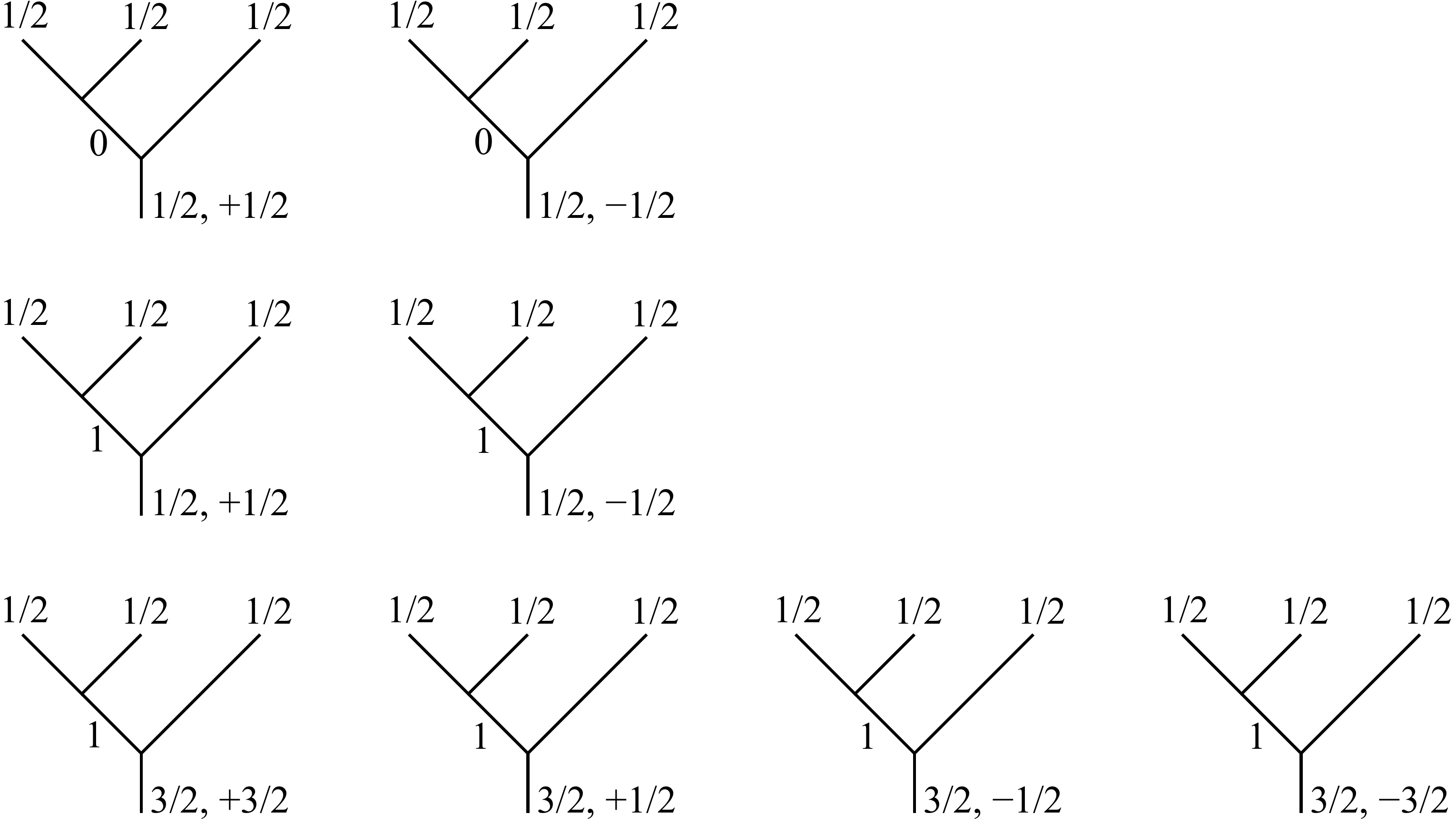}
    \caption{Eight labelled trees, corresponding to an orthonormal basis of $3$ spins, i.e. $(\mathbb{C}^2)^{\otimes 3}$ (Figure from \cite{jordan2010permutational}). Each label corresponds to the spins of the descendant leaves. However, the root has two coordinates; the first is the total spin of all leaves (total spin), and the second is the total azimuthal angular momentum, an eigenvalue of $S_Z$. Every labelled tree can be expressed in the standard basis of $(\mathbb{C}^2)^{\otimes 3}$. For example, the left tree from the second row is $|\lambda\rangle = \sqrt{\frac{2}{3}} |001\rangle - \frac{|010\rangle + |100\rangle}{\sqrt{6}}$.}
    \label{fig:labelledtree}
\end{figure}
\begin{figure}[h]
    \centering
    \includegraphics[width=0.45\textwidth]{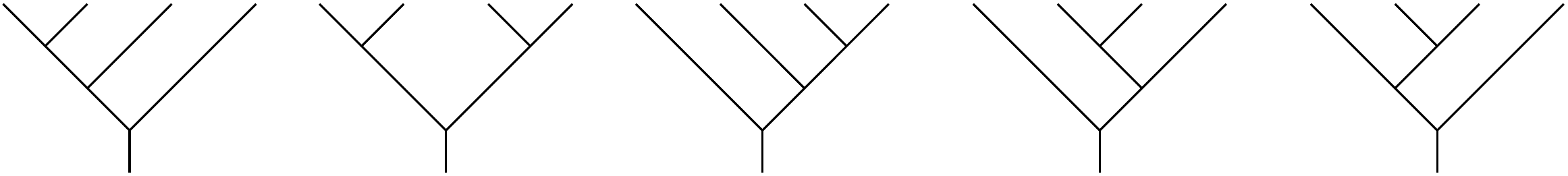}
    \caption{Examples of unlabelled trees ($T,T',\ldots$) for $4$ spins (Figure from \cite{jordan2010permutational}). Measuring, say, the leftmost unlabelled tree, is sampling a labelling of that tree called $|\lambda'\rangle$ with probability $|\langle \lambda|\lambda' \rangle|^2$ for $|\lambda\rangle$ a labelled tree with 4 spins.}
    \label{fig:unlabelledtree}
\end{figure}
There are two models of permutational quantum computing.
In the weak model, 
one can prepare any state $|\lambda\rangle$ corresponding to any labelled tree.
One can then pick any 
any other tree $T'$ and projectively measure the corresponding operators for that tree.
If $\lambda'$ is a labeling of $S$, then one can measure $|\langle \lambda | \lambda'\rangle|^2$ to inverse polynomial accuracy by repeating this projective measurement polynomially many times.
In the strong model, we assume that one can also measure $\langle \lambda | \lambda'\rangle$ (without taking the absolute value) to inverse polynomial accuracy.
While this problem of computing $\langle \lambda | \lambda'\rangle$ can be done classically \cite{H18}, there is no known efficient classical algorithm for the
problem of sampling from $\lambda'$ with probability $|\langle \lambda | \lambda'\rangle|^2$.

We show that:
\begin{theorem}
Using just $s/t$, one can simulate the weak model in polynomial time.
\end{theorem}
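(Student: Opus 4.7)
The central obstacle is that $s/t$ is SU(2)-invariant, so starting from a product of singlets (total spin $0$) we cannot directly prepare a state with a definite root $S_z$. My first step is to circumvent this via a rejection-sampling reduction. For each $m \in \{-S,\ldots,S\}$, let $|\lambda^m\rangle$ denote the state obtained from $|\lambda\rangle$ by replacing the root label $(S,m(\lambda))$ with $(S,m)$. Since $\langle \lambda^m | \lambda'^{m'}\rangle = 0$ for $m \neq m'$, while $|\langle \lambda^m | \lambda'^m\rangle|^2$ is independent of $m$ by SU(2) covariance, it suffices to prepare the $m$-averaged mixture $\rho_\lambda = \frac{1}{2S+1}\sum_{m=-S}^{S}|\lambda^m\rangle\langle \lambda^m|$, measure it in the $T'$-tree basis to sample a labeling $\lambda'$, and accept only if the root $S_z$ of $\lambda'$ equals $m(\lambda)$. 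A short calculation then shows the accepted distribution is exactly $|\langle \lambda | \lambda'\rangle|^2$, and the acceptance probability equals $1/(2S+1) \geq 1/(n+1)$.

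With this reduction in hand, the remaining work is (A) to prepare $\rho_\lambda$ from a product of singlets and (B) to implement a projective measurement in the tree-basis of $T'$, each using only $s/t$. Both tasks decompose into a common subroutine: projectively measure $\vec{S}_v^2$ for each internal vertex $v$ of the relevant tree. Since all $\{\vec{S}_v^2\}_{v\in T}$ mutually commute, these measurements can be performed sequentially in any order, and I would process the tree bottom-up. The base case is a vertex whose two children are leaves, for which $\vec{S}_v^2$ is literally the pair $s/t$ operator. For (A), at each vertex I also apply rejection sampling on whether the measured spin matches the label prescribed by $\lambda$; since there are at most $O(n)$ possible spin values at each vertex and $O(n)$ vertices, the overall overhead remains polynomial.

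The key remaining subroutine is measuring $\vec{S}_v^2$ when the children $u,w$ of $v$ are nontrivial subtrees whose own tree-basis labels have already been fixed. My plan is to implement this via a sequence of ``cross-pair'' $s/t$ measurements (pairs consisting of one qubit from $u$ and one from $w$), supplemented by intra-subtree $s/t$ measurements that restore the earlier labels of $u$ and $w$ when they get disturbed, together with ancilla singlets drawn from the initial resource. Heuristically, a cross-pair $s/t$ acts as a local F-move between tree bases, and any two binary-tree bases on $n$ leaves are related by a polynomial sequence of F-moves; at the algebraic level, pair $s/t$ measurements generate the SU(2)-commutant by Schur--Weyl and so can in principle realize any projector onto an $S_n$-isotype within the multiplicity space of each total-spin sector.

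The hardest part of the proof is controlling this restoration step quantitatively. Concretely, one must show that for every internal vertex $v$ there is a $\mathrm{poly}(n)$-length $s/t$ protocol whose accepted outcomes project onto the joint eigenspace of $\{\vec{S}_u^2\}_{u \preceq v}$ with acceptance probability at least $1/\mathrm{poly}(n)$. I would attempt this by induction on the height of $v$, with the main technical challenge being to bound the disturbance introduced by each cross-pair measurement on the already-fixed lower-level labels and to show that the induced random walk on tree labelings returns to the desired configuration within polynomially many trials. Establishing this mixing-and-restoration bound is the core technical content of the proof.
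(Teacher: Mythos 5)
There are genuine gaps in this proposal. The most serious one is the central subroutine: you propose to realize a \emph{true projective measurement} of $\vec S_v^2$ via ``cross-pair'' $s/t$ measurements plus a ``restoration'' protocol that re-fixes the lower-level labels each time they are disturbed, and you explicitly leave the quantitative mixing-and-restoration bound open. This is not a detail to fill in later; it is the entire problem, and it is not obvious that it can be done. The paper's actual route is quite different and sidesteps restoration entirely: one does \emph{not} implement a clean projective measurement of $\vec S_v^2$. Instead, one repeatedly picks a \emph{random} pair within the set of descendants of $v$ and measures $s/t$; the empirical triplet frequency determines $\vec S_v^2$ because the average of $\vec S_i\cdot\vec S_j$ over random pairs equals $(S^2-\tfrac34 M)/(M(M-1))$, and convergence is exponentially fast since the spectrum is quantized. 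This process scrambles the state inside the subtree (so the already-recorded labels $\vec S_u^2$ of descendants are no longer held by the post-measurement state), but this does not matter for \emph{sampling}, because the observables commute and you only need the sequence of recorded outcomes to have the correct joint distribution; going leaves-to-root then guarantees each new measurement never changes the total spin at ancestors. Your plan, by contrast, needs the state itself to remain in the joint eigenspace, which is a much stronger requirement that forces the restoration machinery you cannot supply.

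The state-preparation step has a second, independent gap. You propose to prepare $|\lambda\rangle$ (or rather $\rho_\lambda$) by measuring $\vec S_v^2$ bottom-up and rejecting unless the outcome matches $\lambda$'s label. You count $O(n)$ vertices times $\mathrm{poly}(n)$ spin values and conclude polynomial overhead, but the rejections are not independent retries: once you have projected onto the wrong $\vec S_v^2$ eigenspace you are stuck there (a genuine projective measurement is destructive), and the naive total acceptance probability is $\Vert \Pi_\lambda\psi_0\Vert^2$, which is generically exponentially small in $n$. To make bottom-up rejection efficient you would need, after each failed projection at $v$, a correction move that randomizes $\vec S_v^2$ while provably preserving the already-fixed labels at $v$'s descendants, and you have not exhibited one. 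The paper instead prepares the state \emph{root-to-leaves}, using a ``splitting'' primitive that takes a set with known total spin $S$ and coherently divides it into two subsets with prescribed spins $S'$ and $S''$; the key point is that splitting acts only inside the current set and never changes the total spin of that set, so ancestors' labels are untouched by construction, and the efficiency of each split is established by a biased-random-walk argument. That primitive is the heavy lifting behind the theorem, and its absence is the largest hole in your write-up. Finally, a smaller point on the root: your rejection test ``accept only if the root $S_z$ of $\lambda'$ equals $m(\lambda)$'' cannot be implemented, since $s/t$ commutes with total $S_Z$ and thus can never reveal it; fortunately the test is also unnecessary (the marginal on the non-$S_Z$ labels is already $|\langle\lambda|\lambda'\rangle|^2$ by the $SU(2)$-covariance computation you sketch), but note also that you cannot reach a state with $S_{\mathrm{root}}\neq 0$ from a singlet dimerization without adjoining extra spins, which is why the paper's reduction explicitly glues on an ancilla subtree to force root spin $0$ rather than taking an $m$-averaged mixture.
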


\subsection{Reducing to case that root has spin $0$}
First note that it suffices to consider the case that the root has spin $0$.  
The reader may prefer to skip this section on first reading.

Indeed, consider any $\lambda,\lambda'$ such that $\lambda$ has spin $S_{root} \neq 0$.
Introduce notation:  let $|\lambda,S_Z\rangle$ be a state defined by some labeled tree $\lambda$ with the $Z$-spin at root replaced by $S_Z$.
Then,
$\langle \lambda',S'_Z | \lambda,S_Z\rangle=0$ unless $S_{root}=S'_{root}$ and $S_Z=S'_Z$.
Fix some unlabelled tree $T'$.
To sample $\lambda'$ with probability $|\langle \lambda | \lambda' \rangle|^2$ , we do the following.
Adjoin an additional $2 S_{root}$ ancilla qubits, defining some tree $\lambda''$ which constrains those ancillas to be in the spin $S_{root}$ state (there are many possible such $\lambda''$ as all we care about is the root spin; choose an arbitrary such $\lambda''$).  Then join the root of $\lambda$ to the root of $\lambda''$ to produce\footnote{The root of $\lambda$ has two labels, since it also has a label by some $S_Z$.  This root becomes an internal vertex of $\hat \lambda$ which should have only one label.  We drop the label $S_Z$ when we build $\hat \lambda$.} some $\hat \lambda$ with spin $\hat S_{root}=0$.
Let $T''$ be an unlabelled tree, obtained by removing the labels from $\lambda''$.
Join the root of $T'$ to $T''$ to produce some $\hat T'$.
Then, prepare $|\hat \lambda\rangle$ and projectively measure the eigenvalues corresponding to tree $\hat T'$.
Note that we will always measure
$\hat S'_{root}=0$.  
This projective measurement gives a labeling of $\hat T'$, which induces a labeling of $T'$ since $T'$ is contained in $\hat T'$ (we can assume that the root of $T'$ has $S'_Z=S_Z$ since otherwise the amplitude is zero).

We claim that this correctly samples the amplitudes.  That is, if $\hat \lambda'$ is a labeling of $\hat T'$, which induces some labeling $\lambda'$ of $T'$, then
$|\langle \hat \lambda | \hat \lambda'\rangle|^2$  is equal to $|\langle \lambda | \lambda' \rangle|^2$.  To see that this, note that
$$|\hat \lambda\rangle=\sum_{S_Z=-S_{root}}^{+S_{root}} A(S_Z) | \lambda,S_Z\rangle \otimes |\lambda'',-S_Z\rangle,$$
where $A(S_Z)$ are some Clebsch-Gordan coefficients.  Do the same expansion for $|\hat \lambda' \rangle$ and then
$$\langle \hat \lambda' | \hat \lambda \rangle=\sum_{S_Z=-S_{root}}^{+S_{root}} |A(S_Z)|^2 \langle\lambda',S_Z|\lambda,S_Z\rangle.$$
Since the inner product on the right-hand side is independent\footnote{
We show this here.  Let (in a slight overuse of notation) $X$ denote the sum of Pauli $X$ operators on all qubits.
The state $\exp(i \theta X)|\lambda,S_Z=S_{root}\rangle$ is a superposition of states
$|\lambda,S_Z\rangle$ with different $S_Z$.
The inner product $\Bigl( \langle \lambda',S_{root}| \exp(-i \theta X)\Bigr) \Bigl(\exp(i \theta X) | \lambda,S_{root}\rangle \Bigr)$ is independent of $\theta$ and by expanding in $\theta$ it follows that
$\langle \lambda',S_Z| \lambda,S_Z\rangle$ is independent of the particular $S_Z$.
\iffalse 
To see this, consider the inner product $\langle \lambda,S_Z| \lambda',S_Z\rangle$, where $|\lambda,S_Z\rangle$ is a state defined by some labeled tree $\lambda$ with the $Z$-spin at the root replaced by $S_Z$.  The case where $S_Z=S_{root}$ is the maximal $S_Z$.  Let (in a slight overuse of notation) $X$ denote the sum of Pauli $X$ operators on all qubits.  Then
$\langle \lambda,S_{root} | \lambda',S_{root}\rangle=\langle \lambda,S_{root} | \exp(-i \theta X) \exp(i \theta X) | \lambda',S_{root} \rangle=
\Bigl( \langle \lambda,S_{root}| \exp(-i \theta X)\Bigr) \Bigl(\exp(i \theta X) | \lambda',S_{root} \rangle \Bigr)$.  Expanding the right-hand side to second order in $\theta$, we obtain $\langle \lambda,S_{root} | \lambda',S_{root} \rangle=\langle \lambda,S_{root}-1 | \lambda',S_{root}-1\rangle$.  Inserting $\exp(-i \theta X) \exp(i \theta X)$ into $\langle \lambda,S_{root}-1 | \lambda',S_{root}-1\rangle$ and again expanding to second order in $\theta$, we obtain $\langle \lambda,S_{root} | \lambda',S_{root} \rangle=\langle \lambda,S_{root}-2 | \lambda',S_{root}-2\rangle$, and so on.\fi
} of $S_Z$, the result follows.

\subsection{Measuring operators for some tree}
Next we describe how to projectively measure operators corresponding to any tree $T'$ with root having total spin $0$.  The key is that given any set $U$ of qubits with $|U|=M$ for some $M$, we can approximately measure the total spin of those qubits as follows.  Simply repeat the following operation many times: choose a random pair of qubits, and measure $s/t$.  
We show below that the number of measurements needed is at most inverse polynomial in the accuracy.

Then, letting $S^2$ denote the squared spin, and $\vec S_i$ denote a vector of spin operators (one-half the Pauli operators) on qubit $i$,
\begin{align}\label{totalspineq} 
S^2=\sum_{i=1}^M \sum_{j=1}^{M} \vec S_i \cdot \vec S_j
\end{align}
$$= \frac{3}{4} M + \sum_{i \neq j} \vec S_i \cdot \vec S_j $$
$$= \frac{3}{4} M + M(M-1)\frac{\sum_{i \neq j} \vec S_i \cdot \vec S_j}{M(M-1)}$$
Hence, averaging over randomly selected $i,j$, the expectation value of $\vec S_i \cdot \vec S_j$ equals
$(S^2-\frac{3}{4} M)/(M(M-1))$.  In a triplet state, $\vec S_i \cdot \vec S_j=1/4$ while in a singlet it equals $-3/4$.  So, the probability of triplet is $$(S^2-\frac{3}{4} M)/(M(M-1))+3/4.$$

Note that while this measurement of singlet or triplet may change the state of the qubits in $U$, it does not change the total spin and since we randomly choose $i,j$ each time, the probability of triplet depends only on the total spin.  Hence, since the measurements are independent, we may estimate the spin in a time which is inverse polynomial in the accuracy.
Indeed, since the spin is quantized, the convergence is actually faster than polynomial.  Once the number of measurements is polynomially large compared to $S^2$, the convergence in accuracy becomes exponential\footnote{This fact was used in \cite{rudolph2021relational} for the STP=BQP proof.}.

Using this ability to measure total spin of a set of particles as a primitive, we can measure the operators corresponding to any tree $T'$.  The key is to start at the leaves and work towards the root.  Measure operators corresponding first to vertices closest to the leaves.  Proceed toward the root, only measuring an operator on a given vertex once all operators below that vertex have been measured.

The point is, that while the operators corresponding to different vertices commute with each other, our measurement process reveals extra information.  Our measurement process need not commute for different vertices.  However, our measurement process for a given vertex does not affect the total spin on vertices closer to the root.

\subsection{Preparing states}
We now explain how to prepare a state $|\lambda\rangle$ in polynomial expected time.
In this case, we work in the reverse direction: we start at the root and work towards the leaves.  In particular, we will prepare a sequence of states $|\lambda_0\rangle,|\lambda_1\rangle,\ldots$.  
This corresponds to a sequence of {\it partially labeled} trees $\lambda_0,\lambda_1,\ldots$.  In a partially labeled tree, some set of vertices will have labels, and
if a vertex $v$ is labeled, then all vertices which are ancestors of $v$ will also be labeled.  The partially labeled tree $\lambda_0$ will have the root labeled with $S_{root}=0$ and no other labels.  Each tree $\lambda_{i+1}$ will have two more vertices labeled than the previous tree $\lambda_i$ in the sequence; this will be done by labeling a pair of vertices which are children of some vertex $v_i$ which is labeled in $\lambda_i$, and the last tree in the sequence will have all vertices labeled and will be the same as $\lambda$.

Each state $|\lambda_i\rangle$ corresponding to some partially labeled tree $\lambda_i$ will have the property that for every operator corresponding to a labeled vertex of $\lambda_i$, the state $|\lambda_i\rangle$ will have the corresponding expectation value for that operator.  No other properties of the state are assumed, so a partially labeled tree does not uniquely specify a state.

Preparing the first state $|\lambda_0\rangle$ is easy: one may simply create any choice of $n/2$ singlets.
We will 
construct a primitive operation that we call {\it splitting} which has the following properties. {\bf 1}: The splitting primitive takes as input a 
set of $n$ qubits with total spin $S$ for some $n,S$; there are no other assumptions on the input state.
{\bf 2}: One fixes some $m<n$ and some $S', S''$
with $|S''-S'| \leq S \leq S''+S'$.
{\bf 3}: The splitting primitive applies $s/t$ measurements to qubits from that set, taking polynomial expected time.
{\bf 4}: The resulting state has (up to exponentially small error) the first $m$ qubits with total spin $S'$
and the remaining $n-m$ qubits with total spin $S''$.

Given this splitting primitive, we can then produce each state $|\lambda_{i+1}\rangle$ from the state $|\lambda_{i}\rangle$, by applying splitting to the set of qubits corresponding to descendants of $v_i$, with $S',S''$ depending on the labels in $\lambda_{i+1}$ for the children of $v_i$.

We now construct the splitting primitive above.
First, let us say that a set of $n$ qubits with total spin $S$ are in {\it canonical form} if
there are $n-2S$ singlet pairs (in some fixed configuration, rather than a superposition) and the remaining $2S$ qubits are in a totally symmetric state.  For example, a state on $8$ qubits with qubits $1,3$ in an singlet and $4,7$ in a singlet and $2,5,6,8$ in a totally symmetric state is in canonical form.

We divide the construction of splitting into four steps.

{\it First Step---}
First, we take the $n$ qubits and bring them to canonical form.
The construction is recursive.
If $n=2S$, then they are already in canonical form.  If not, pick a pair of qubits at random and measure $s/t$.  If the result is triplet, pick another pair at random, and try again, continuing until eventually some pair is in a singlet.   That gives one singlet; we then bring the remaining $n-2$ qubits to canonical form using the same algorithm recursively.  This takes polynomial expected time.

{\it Second Step---}
Recall that we wish to divide the set of $n$ qubits into two sets, with $m$ and $n-m$ qubits respectively, and with total spin $S'$ and $S''$, with $S''\geq S'$.
Let $\Delta=S''-S'$.
Let $S'_{min}$ and $S''_{min}$ be the two values for total spin such that $S''_{min}-S'_{min}=\Delta$ and
$S''_{min}+S'_{min}=S$.

Our second step will be to take the state after the first step which is already in canonical form, and divide it into two sets of qubits, of sizes $m,n-m$ respectively, with total spins $S'_{min}$ and $S''_{min}$ respectively, with each set in canonical form.
Call these two sets $Q_1$ and $Q_2$.
This division can be done easily: take the $n-2S$ qubits in a totally symmetric state, and place $2S'_{min}$ qubits in $Q_1$ and the remaining $2S''_{min}$ in $Q_2$.  Then, take the singlets from the state in the first step, and place each singlet into one of the two sets (either $Q_1$ or $Q_2$), so that the total sizes of the two sets are correct.

{\it Third Step---}
Our third step acts only on certain subsets of sets $Q_1,Q_2$.  We call these subsets $R_1,R_2$ and both have size $2\Delta$.  These subsets $R_1,R_2$ are given by choosing $\Delta$ singlets from $Q_1$ and letting those
 be in $R_1$ and choosing $\Delta$ singlets from $Q_2$ and letting those  be in $R_2$.
 Qubits will remain in the subset they are in after the second step, but their state will change due to this step.  What the step will do is bring it to a state where those qubits are now in a totally symmetric state in each set individually (i.e., the $2\Delta$ qubits in $R_1$ are totally symmetric, as are the $2\Delta$ qubits in $R_2$), while the total spin of those $4\Delta$ qubits is still $0$.

To do this, we use an iterative algorithm.  First, {\bf A}: bring both subsets $R_1,R_2$ into canonical form.  If there are no singlets in either one, terminate.  Otherwise, {\bf B}: if either subset has a singlet, then so must the other\footnote{This is because the total spin of qubits in $R_1,R_2$ is $0$.  If $R_1$ has at least one singlet, then $R_1$ has total spin less than the maximal value of $\Delta$ and so $R_2$ must also have total spin less than $\Delta$ in order for total spin of $R_1,R_2$ to be $0$.}.  Pick a pair of qubits, one from a singlet in each subset, and measure $s/t$.  Call these qubits $1,2$ and call the singlets that they are in respectively $s_1,s_2$.  Then, regardless of the measurement outcome, measure $s/t$ on both qubits in $s_1$.  If the result is $s$, this pair of measurements has had no effect on the spins; in this case repeat the pair of measurements, continuing until the second measurement give $t$.  Once the second measurement gives $t$,
go back to step {\bf A}.
 
We claim that this takes polynomial expected time.  Consider the effect of {\bf B}.  Suppose that before measuring {\bf B} the total spin in each set was $S$.  After {\bf B},
the total spin in each set must be one of the possibilities $S-1,S,S+1$.  The total spin squared\footnote{We remind the reader that the spin squared is $S(S+1)$, not $S^2$.} before {\bf B} is $S(S+1)$ and the total spin squared after {\bf B} has expectation value $S(S+1)+2$.  
The process of bringing into canonical form measures the total spin in each set.
We claim that the probability that the total spin squared is $S+1$ is greater than the probability that
it is $S-1$.  Indeed, this must follow in order for the total spin squared to have
expectation value $S(S+1)+2$ after {\bf B} because
$$\frac{(S-1) S + (S+1) (S+2)}{2}=S^2+S+1$$
$$<S(S+1)+2.$$
Hence, the total spin does a biased random walk with the bias toward increasing spin, and so the spin must become maximal in at most polynomial time.

{\it Fourth Step---}
After the third step, we have the following.
Each of the two sets $Q_1,Q_2$ has three subsets.  Let $1A,1B,1C$ denote the three subsets of the first set and $2A,2B,2C$ denote the three subsets of the second set, with $1A,1B$ comprising $R_1$ and
$2A,2B$ comprising $R_2$.   The sets $1A,2A$ each contain qubits in some product of singlets.
The sets $1B$ and $2B$ each contain qubits in a totally symmetric state, with the union of $1B$ and $2B$ having total spin $0$.  The sets $1C$ and $2C$ also each contain qubits in a totally symmetric state, but now the union of $1C$ and $2C$ is also in a totally symmetric state.
In the fourth step, we act on sets $1B$ and $1C$ to try to bring them to a totally symmetric state (we also do the same procedure to $2B$ and $2C$ with the same goal but we just describe it for $1B,1C$).  To do this, we we apply a large number of $s/t$ measurements on qubits randomly chosen from the union of $1B$ and $1C$.  If all measurement outcomes are $t$, then the application of these measurements converges to projecting onto the state where those qubits in $1B$ union $1C$ are in a totally symmetric state of total spin $S'$  and we succeed.  The convergence to this projector is exponential, once one has more than polynomially many measurements.  

The probability that all measurements are $t$ in this step is at least inverse polynomial.  This may be seen by computing the projection of the initial state onto the space where $1B,1C$ are totally symmetric and $2B,2C$ are totally symmetric.
If we fail, so that some measurement is $s$, we repeat all steps of this process.

\section{Generalizing the Model}
\label{secgtm}
The model STP can be generalized in several ways.  One natural generalization is to consider symmetries other than $SU(2)$, such as $SU(m)$ for $m>2$.  Another natural generalization is to consider higher spin representations of $SU(2)$.

In such a higher spin representation model, we have qudits, for $d=2S+1$ with $S$ integer or half-integer.  We may consider having several different kinds of qudits simultaneously, for example having both qubits ($S=1/2$) and qutrits ($S=1$).  We allow any two qudits (perhaps of different dimensions) to be brought together, and the total spin to be measured.

As a toy physical example, one might imagine deuterium.  The deuterium nucleus has total spin $1$, while the electron has spin $1/2$.  The deuterium atom then has total spin $1/2$ or $3/2$ and there is a hyperfine splitting between these states.

Interestingly, this higher spin model can be simulated using just $s/t$ on qubits.  To simulate a qudit with spin $S$, use $2S$ qubits in a totally symmetric state.  When two qudits with spin $S,S'$ are brought together, we can measure the total spin of the $2S+2S'$ qubits by repeatedly selecting pairs of qubits uniformly at random and measuring $s/t$.  Exactly as in \cref{secpqc}, the convergence in accuracy is exponential once we have a number of measurements which is polynomial in the total spin.

After measuring the total spin, we can then bring the $2S+2S'$ qubits into two sets of $2S,2S'$ qubits respectively, both in a totally symmetric state, again as in \cref{secpqc}.  Indeed, this is done by the splitting primitive.

\section{PostSTP Equals PostBQP}
\label{postbqp}
Let us formally define STP with postelection.  
\begin{definition}
Let PostSTP be the class of languages $L\subset \{0,1\}^{*}$ such that for all inputs $x$ the following holds.  A quantum state is initialized as a product of
polynomially many singlets.
Then some classical algorithm (determined by $L$) takes $x$ as input and outputs a sequence of polynomially many $s/t$ measurements (as well as outcomes to postselect upon for all but the last measurement), taking polynomial time to output this sequence.  The sequence of measurements is applied to the input state.
Postselection is applied on all but the last measurement,  with the promise that the outcomes postselected on have nonzero probability.  Finally, if $x\in L$, then the last measurement is $s$ with probability at least $p$ for some $p>0$ and if $x\not \in L$, then the last measurement is $s$ with probability at most $p'$ for some $p'$ strictly less than $p$.  The quantities $p,p'$ are independent of input size.
\end{definition}
\begin{remark}
To define postselection in symbols, let $\psi_0$ be the product of singlets which is the initial state.
Let $\Pi_j$ be a projector which projects onto the desired outcome (either $s$ or $t$) of the $j$-th measurement.
Let $A_j=\Pi_j \Pi_{j-1} \ldots \Pi_1$ be the product of these projectors up to the $j$-th one.
Let there be $K=\poly(N)$ measurement outcomes that we postselect on, and let $\Pi_{k+1}$ denote the projector onto the singlet outcome for the last (the $(K+1)$-st) measurement.
Then, the probability that the last outcome is $s$ is
$$\frac{\langle \psi_0 | A_K^\dagger \Pi_{K+1} A_K | \psi_0\rangle}{\langle \psi_0 | A_K^\dagger A_K | \psi_0\rangle},$$
where we are promised that the denominator is nonzero.
\end{remark}

We will show
\begin{theorem}
PostSTP equals PostBQP.
\end{theorem}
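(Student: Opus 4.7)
The plan is to prove PostSTP $=$ PostBQP via two containments.

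For the forward direction PostSTP $\subseteq$ PostBQP, I would observe that each $s/t$ measurement on a specified pair of qubits is just a rank-one-vs-rank-three two-qubit projective measurement, and hence can be simulated in constant depth by a BQP circuit (for instance by rotating into the Bell basis, measuring in the computational basis, and rotating back). The polynomial-time classical control carries over directly, and a postselection on an $s/t$ outcome in PostSTP translates immediately into a postselection in the simulating PostBQP circuit. So a length-$\poly(N)$ PostSTP protocol becomes a $\poly(N)$-size PostBQP computation.

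For the reverse direction PostBQP $\subseteq$ PostSTP, I would build on \cref{lemxzst}, which shows that $s/t$ measurements together with single-qubit $X, Z$ unitaries approximate a universal BQP gate set with polylogarithmic overhead. Since postselection is closed under polylog overhead, ``$s/t$ plus $X, Z$ plus postselection'' equals PostBQP. It therefore suffices to show that the $X, Z$ operations used in the proof of \cref{lemxzst} can themselves be realized by sequences of postselected $s/t$ measurements.

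The key insight I would exploit is that every invocation of $X$ or $Z$ in \cref{lemxzst} is as a Pauli correction conditional on a measurement outcome, applied in order to rotate some non-singlet Bell state into the singlet so that a subsequent $s/t$ measurement separates it from the other triplet states. Given postselection, we can instead postselect on the outcome for which no correction is needed, namely $s/t = s$ on the very first measurement of each subroutine. Concretely: the Bell-basis measurement collapses to a single postselected $s/t = s$, which projects the pair onto the singlet and therefore teleports through a singlet resource pair with no Pauli correction; the $O_{ZZ}$, $O_{XX}$, and $\hat O_{ZZ}$, $\hat O_{XX}$ primitives collapse to postselected singlet projections; the offline preparation of $\psi_{CNOT}$, which in \cref{lemxzst} was already a ``try until both $O_{ZZ}$ and $O_{XX}$ succeed'' protocol, becomes a single postselected branch whose success probability is manifestly positive; and the magic-state preparation plus random walk of \cref{irrmeasure} can be replaced by postselecting directly on a measurement-outcome sequence that lands the random walk within $\epsilon$ of $\pi/8$, removing the polylogarithmic overhead entirely. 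Similarly, the single-qubit Pauli measurement ``standards'' of \cref{subsecxzst} can be built by postselecting on favorable outcomes of the biased random walk in $|S|$.

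The main obstacle I anticipate is a bookkeeping one compounded by the $SU(2)$-invariance of the model: since postselected $s/t$ acts only within the $SU(2)$-invariant sector of the initial singlet product, I cannot literally prepare states like $|0\rangle^{\otimes n}$ on physical qubits the way \cref{lemxzst} does. To handle this I would carry out the entire simulation in the singlet-based encoding already used in \cref{secpqc}, treating logical $X, Z$ operations as $SU(2)$-invariant operations on physical qubits and verifying that each postselected branch in every subroutine of \cref{lemxzst} has strictly positive probability, as required by the definition of PostSTP. Chasing the substitutions through the layered construction of \cref{lemxzst} and verifying positivity at each step is the main technical work, but once done it yields a polynomial-overhead PostSTP simulation of PostBQP and closes the proof.
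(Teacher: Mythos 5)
Your forward containment PostSTP $\subseteq$ PostBQP is correct and is essentially what the paper implicitly relies on. But the reverse direction as you sketch it does not work, and the obstacle you flag at the end is not a ``bookkeeping'' issue — it is fatal to the route you chose.

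The central problem is your key insight, which is false. It is not true that every use of $X$ or $Z$ in the proof of \cref{lemxzst} is a conditional Pauli correction that can be bypassed by postselecting on a favorable $s/t$ outcome. In the Bell-basis measurement, $Z_A$ and $X_A$ are applied precisely to \emph{change} the state so that subsequent $s/t$ measurements can distinguish the three triplet states; if you postselect on $s$ at the first step you simply have a singlet projection, not a Bell measurement, and you never gain the ability to resolve the triplet sector. More fundamentally, every operation in PostSTP — singlet and triplet projections on pairs, with or without postselection — is $SU(2)$-invariant (each $s/t$ projector is $\frac{1}{2}(1 \mp \mathrm{SWAP})$, so everything lives in the image of the symmetric-group algebra under Schur--Weyl). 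The single-qubit gates $X$ and $Z$, the state $\psi_{CNOT}$, and the $|0\rangle^{\otimes n}$ standards are all non-$SU(2)$-invariant objects, so no amount of postselecting $SU(2)$-invariant measurements on an $SU(2)$-invariant initial state can produce them literally. Your proposed fix, ``carry out the entire simulation in the singlet-based encoding of \cref{secpqc},'' is not a small modification of \cref{lemxzst}; it would require replacing every primitive of that proof with a new construction, and you have not said how.

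The paper's actual proof goes a different way that sidesteps this completely: it uses postselected $s/t$ projections to implement the $SU(2)$-invariant operator $1 + \epsilon\,\vec S_A\cdot\vec S_B$ for arbitrarily small $\epsilon$ via a doubling recursion $\epsilon' = \epsilon^2/4$, and hence simulates imaginary time evolution under time-dependent Heisenberg Hamiltonians $\sum_{ij} J_{ij}(t)\,\vec S_i\cdot\vec S_j$. It then invokes the encoding of \cite{C16} (three physical qubits per logical qubit), under which $SU(2)$-invariant Heisenberg interactions realize effective logical $X$, $Z$, $XX$, $ZZ$ terms; these give postselected universal computation on the logical level, with a threshold-theorem argument to drive errors exponentially small. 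The essential move you are missing is to get logical (encoded) Pauli operations out of purely $SU(2)$-invariant physical operations, rather than trying to remove the physical $X,Z$ from \cref{lemxzst} by postselection.
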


In outline, we prove this result by first showing  in \cref{ite} that we can use postselection to simulate imaginary time evolution with Heisenberg interactions.  Ref.~\cite{C16} showed that approximating the ground state energy of Hamiltonian with Heisenberg interactions is QMA-hard (see also \cite{Cubitt_2018}), so this shows already PostSTP is at least as powerful as QMA: simply evolve under the desired Heisenberg Hamiltonian for a polynomially long imaginary time.
To show that we get PostBQP, we use the ability to produce evolution ``time-dependent" Heisenberg interactions in imaginary time, i.e., to vary the Heisenberg Hamiltonian that we evolve under.  We use the same encoding as in \cite{C16} to show in \cref{postbqpsubsec} that this gives us PostBQP.  Appropriate choices of time-dependent Heisenberg Hamiltonians will give us both circuits and measurement.  We will have to pay some attention to making errors exponentially small when we do this.

Before doing this, we show in \cref{smallamp} that the probabilities that we postselect on can only ever become exponentially small in $\poly(N)$.

\subsection{A Remark On How Small Amplitudes Can Become}
\label{smallamp}
Let $\psi_0$ be the product of singlets which is the initial quantum state for PostSTP.

We show
\begin{lemma}
The nonzero probabilities that can occur in PostSTP are all $\Omega(\exp(-\poly(N))$.
\begin{proof}
The probability that the $j$-th measurement outcome has the desired value is
$$\frac{\langle \psi_0 | A_{j}^\dagger  A_{j} | \psi_0\rangle}{\langle \psi_0 | A_{j-1}^\dagger A_{j-1} | \psi_0\rangle}.$$
Clearly, the denominator is at most $1$.  So, we lower bound the numerator.

Note that every $t$ postselection can be replaced by $(1/2)(1+{\rm SWAP})$, where ${\rm SWAP}$ is the gate that swaps a pair of qubits.  Similarly, every $s$ postselection can be replaced by $1/2(1-{\rm SWAP})$.
Hence, the numerator is the sum of $4^{j}$ different terms, corresponding to replacing individual projectors $\Pi_j$ by either the identity or ${\rm SWAP}$.
The contribution of any such term to the expectation value is of the form
$4^{-j} \langle \psi_0 | {\rm Permute} | \psi_0\rangle$, where ${\rm Permute}$ is the composition of some SWAPs, hence some permutation to the qubits.
Recall that $\psi_0$ is the product of singlets, and therefore it is easy to see that for any permutation Permute of the qubits, the expectation value $\langle \psi_0 | {\rm Permute} | \psi_0\rangle$ is $\pm 2^{-J}$ for some $0\leq J \leq N-1$.
So, the expectation value is a sum of $4^{j}$ different terms, each of which equals $\pm 4^{-j} 2^{-J_{\rm Permute}}$, for some $J_{\rm Permute}$ depending on the permutation Permute.  Hence, if nonzero, the expectation value is at least
$4^{-j} 2^{-N-1}$, where $j\leq \poly(N)$.
\end{proof}
\end{lemma}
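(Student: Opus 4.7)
The plan is to trivially upper bound the denominator by $1$ and lower bound the numerator when it is nonzero. The conditional probability that the $j$-th measurement returns the postselected outcome is $\langle\psi_0|A_j^\dagger A_j|\psi_0\rangle/\langle\psi_0|A_{j-1}^\dagger A_{j-1}|\psi_0\rangle$, and the denominator is at most $1$ since $A_{j-1}^\dagger A_{j-1} = \Pi_1 \cdots \Pi_{j-1} \cdots \Pi_1$ is a product of projectors, each of operator norm $1$. So it suffices to lower bound the numerator.

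The key algebraic observation is that each $s/t$ projector on a pair $(a,b)$ has the form $\tfrac{1}{2}(I \pm \Sw_{ab})$, since $\Sw_{ab}$ has eigenvalue $-1$ on the singlet and $+1$ on the triplet. Expanding the $2j$ projectors appearing in $A_j^\dagger A_j$ yields a sum of $4^j$ terms, each of the form $\pm 4^{-j} P_\sigma$, where $P_\sigma$ is the permutation operator on the $N$ qubits obtained by composing the chosen sequence of transpositions $\Sw_{a_k b_k}$. Hence the numerator equals $4^{-j}$ times a signed sum of at most $4^j$ singlet overlaps $\langle\psi_0|P_\sigma|\psi_0\rangle$.

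The main structural input is that each such overlap is either zero or of the form $\pm 2^{c - N/2}$, where $c$ is the number of cycles in the multigraph obtained by overlaying the singlet matching of $\psi_0$ with its image under $\sigma$. To see this, note that $P_\sigma |\psi_0\rangle$ is, up to a sign from the antisymmetry of each singlet, a product of singlets on the permuted pairs; the overlap therefore reduces to the inner product of two singlet-product states corresponding to two perfect matchings of the $N$ qubits. Overlaying the two matchings yields a disjoint union of even-length cycles, and a direct calculation (or induction on the number of cycles, reducing by matching contractions) shows that a cycle of length $2\ell$ contributes a factor $\pm 2^{1-\ell}$, so the product over $c$ cycles equals $\pm 2^{c - N/2}$. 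Since $1 \leq c \leq N/2$, every nonzero overlap has magnitude at least $2^{1 - N/2}$.

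Putting the pieces together, each of the $4^j$ terms is a rational number of the form $\pm 2^{-(2j + J)}$ with $0 \leq J \leq N/2 - 1$, so the numerator is an integer multiple of $2^{-(2j + N/2 - 1)}$; when nonzero, its magnitude is at least that value, which is $\Omega(\exp(-\poly(N)))$ because $j \leq \poly(N)$. The main conceptual hurdle is the singlet-matching loop-overlap formula; once it is in hand, everything else is elementary bookkeeping. An alternative route avoiding the explicit cycle formula is to observe directly that every amplitude of $A_j|\psi_0\rangle$ in the computational basis is a dyadic rational with denominator bounded by $2^{j + N/2}$, which yields the same lower bound on the squared norm.
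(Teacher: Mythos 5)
Your proof is correct and takes essentially the same route as the paper: bound the denominator by $1$, expand each projector as $\tfrac{1}{2}(I\pm\Sw)$ to write the numerator as a signed sum of $4^j$ permutation overlaps, observe that each overlap is a dyadic number of bounded denominator, and conclude that a nonzero integer combination of such terms cannot be exponentially smaller than the common grid size. The only substantive addition is that you spell out the cycle-counting formula $\pm 2^{c-N/2}$ for the singlet-matching overlap, which the paper leaves as "easy to see"; note as a minor point that such an overlap is in fact never zero (the overlay of two perfect matchings always yields at least one cycle), so the "either zero or" clause is vacuous.
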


\subsection{Simulating Imaginary Time Evolution}
\label{ite}
Let $\psi_{resource}(\epsilon)$ be the four qubit state, on four qubits called $C,D,E,F$, with total spin $0$ given by
first preparing $C,E$ in a singlet and $D,F$ in a singlet and then acting with $1+\epsilon \vec S_E \cdot \vec S_F$, and finally appropriately normalizing the state to unit norm.

Now consider the effect of bringing in two extra qubits $A,B$ in some arbitrary state $\psi_{AB}$, and postselecting on $A,C$ being $s$ and on $B,D$ being $s$.
One may commute the postselection on $A,C$ and $B,D$ being in $s$ with the action of 
$1+\epsilon \vec S_E \cdot \vec S_F$ used to define $\psi_{resource}(\epsilon)$.
So,
the effect of this operation is to teleport the state $(1+\epsilon \vec S_A \cdot \vec S_B) \psi_{AB}$ to qubits $E,F$, while leaving $A,C$ and $B,D$ in singlets.

So, if we can produce $\psi_{resource}(\epsilon)$, we can apply the operation $1+\epsilon \vec S_A \cdot \vec S_B$.
Of course, the ability to apply $1+\epsilon \vec S_A \cdot \vec S_B$ correspondingly implies the ability to
create $\psi_{resource}(\epsilon)$.  Thus, the two things (the state and the operation) are equivalent as resources.

Note that the projection onto $t$ is given (up to normalization) by the operation $(1+\epsilon_0 \vec S_A \cdot \vec S_B)$ with
$\epsilon_0=4/3$, so we can create $\psi_{resource}(4/3)$.

We will give a protocol to reduce $\epsilon$, consuming a pair of states $\psi_{resource}(\epsilon)$ with given $\epsilon$ and applying the operation $(1+\epsilon' \vec S_A \cdot \vec S_B)$ to an arbitrary pair of qubits $A,B$,
with
\be
\label{epsp}
\epsilon'=\frac{\epsilon^2}{4}.
\ee
If $A,B$ were in a singlet initially, this creates $\psi_{resource}(\epsilon')$.
By applying this protocol repeatedly, we can produce a sequence of states $\psi_{resource}(\epsilon_i)$ with $\epsilon_0=4/3$ and
$$\epsilon_{i+1}=\frac{\epsilon_i^2}{4}.$$
\begin{remark}
Of course we could also start with $\epsilon_0=-4$ which projects onto $s$ up to normalization, but this does not lead to anything interesting.
\end{remark}

The protocol is as follows.  Create a pair of qubits $C,D$ in a singlet, and let $A,B$ be arbitrary.  Then, apply $(1+\epsilon \vec S_A \cdot \vec S_C)(1+\epsilon \vec S_B \cdot \vec S_D)$, consuming the two copies of $\psi_{resource}(\epsilon)$ to do this, and again project onto $C,D$ in a singlet.  A little algebra shows that the resulting state (up to normalization) is a singlet on $C,D$ and the operation
$(1+\epsilon' \vec S_A \cdot \vec S_B)$ is applied.
We emphasize that \cref{epsp} is {\it not} a perturbative result for small $\epsilon$, but rather holds for {\it all} $\epsilon$.

The cost of applying the $i$-th operation, $(1+\epsilon_i \vec S_A \cdot \vec S_B)$, is exponential in $i$.  At the same time, the magnitude of $\epsilon_i$ decreases doubly-exponentially in $i$, roughly squaring at every step.
So, for any $x\in (0,1]$, we can construct the operation $1+\epsilon \vec S_A \vec S_B$ for some $\epsilon$ in the interval $[x^2,(4/3)x]$
using a number of operations at most logarithmic in $\epsilon^{-1}$: simply search for the first term in the sequence which lies in this interval.

To obtain operations with negative $\epsilon$,
consider a slight modification of the above sequence of operations. Create a pair of qubits $C,D$ in a singlet, and let $A,B$ be arbitrary.  Apply $(1+\epsilon_0 \vec S_A \cdot \vec S_C)(1+\epsilon_i \vec S_B \cdot \vec S_D)$, and again project onto $C,D$ in a singlet, with $\epsilon_0=-4/3$ and $\epsilon_i$ from the above sequence. So,
\begin{lemma}
\label{sij}
For any $\tilde \epsilon\in [-1,1]$ and any $\delta>0$, we can construct the operation
$1+\epsilon \vec S_A \cdot \vec S_B$ for some $\epsilon$ with $|\epsilon-\tilde \epsilon|\leq \delta$, using a number of operations at most $O(\log(\delta^{-1}) \delta^{-2})$.
\end{lemma}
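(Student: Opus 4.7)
The plan is to obtain $1 + \tilde\epsilon \vec S_A \cdot \vec S_B$ by sequentially composing many copies of a single, very small operation drawn from the doubly-exponential sequence $\{\epsilon_i\}$ constructed just before the lemma. The key observation is that $\vec S_A \cdot \vec S_B$ has only two distinct eigenvalues on a pair of qubits, so any $1 + \epsilon \vec S_A \cdot \vec S_B$ is fixed, up to an overall scalar, by the ratio $\alpha(\epsilon) := (1 - 3\epsilon/4)/(1 + \epsilon/4)$ of its actions on the singlet and triplet subspaces. Sequential application of such operations multiplies these ratios, so an $m$-fold composition of $1 + \epsilon \vec S_A \cdot \vec S_B$ is (up to a scalar) equal to $1 + \epsilon^{(m)} \vec S_A \cdot \vec S_B$ with $\alpha(\epsilon^{(m)}) = \alpha(\epsilon)^m$. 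This reduces the approximation task to finding an integer power of a single small $\alpha(\epsilon_j)$ that hits $\alpha(\tilde\epsilon)$.

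Concretely, I would first pick an index $j$ so that $|\epsilon_j| \leq \delta^2$. Because the sequence shrinks doubly exponentially, this $j$ satisfies $\log |\epsilon_j|^{-1} = O(\log \delta^{-1})$, and preparing one copy of $\psi_{resource}(\epsilon_j)$ and teleporting through it to implement one application of $1 + \epsilon_j \vec S \cdot \vec S$ costs $O(\log \delta^{-1})$ basic $s/t$ measurements. Both signs of $\epsilon_j$ of comparable magnitude are available: the positive one from the iteration $\epsilon_{i+1} = \epsilon_i^2/4$, and a negative one from the $\epsilon_0 = -4/3$ modification recalled immediately above the lemma. I would pick the sign matching $\tilde\epsilon$, let $m$ be the positive integer minimizing $|\alpha(\epsilon_j)^m - \alpha(\tilde\epsilon)|$, and apply the teleportation-based operation $m$ times on $A,B$.

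For the error bound, note that $\alpha(\epsilon_j) = 1 + O(\delta^2)$, so consecutive lattice points $\alpha(\epsilon_j)^m$ differ by the factor $1 + O(\delta^2)$; uniformly on the compact range $\alpha \in [1/4, 7/3]$ corresponding to $\tilde\epsilon \in [-1,1]$, the absolute spacing is $O(\delta^2)$. The inverse map $\alpha \mapsto \epsilon$ is a smooth diffeomorphism with bounded derivative on this range --- the singularities at $\alpha = 0, \infty$ correspond to $\epsilon = 4/3, -4$, safely outside $[-1,1]$ --- so the granularity in $\epsilon^{(m)}$ is likewise $O(\delta^2) \leq \delta$. Reaching any $|\epsilon^{(m)}|$ up to $1$ requires $m\, |\log \alpha(\epsilon_j)| = O(1)$, hence $m = O(\delta^{-2})$, and multiplying by the per-application cost $O(\log \delta^{-1})$ yields the claimed $O(\log(\delta^{-1})\delta^{-2})$.

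The main obstacle I foresee is the uniform granularity estimate in the third paragraph: one must verify that the spacing of $\alpha(\epsilon_j)^m$ really is $O(\delta^2)$ throughout the entire target range $[1/4, 7/3]$, and that the derivative of $\alpha^{-1}$ is uniformly bounded there. Both facts are routine given the explicit rational form of $\alpha$ and the comfortable distance between $[-1,1]$ and the poles $\epsilon \in \{4/3, -4\}$, but they require a short explicit computation to pin down the constants hidden in the $O(\cdot)$ notation.
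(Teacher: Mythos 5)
Your proposal is correct and follows essentially the same route as the paper's proof: pick a sign-matched $x=\Theta(\delta^2)$ from the doubly-exponential sequence (costing $O(\log\delta^{-1})$ per application) and take $O(\delta^{-2})$ powers; you supply the eigenvalue-ratio bookkeeping that the paper relegates to the remark about working with the exponentiated form $\exp(y\,\vec S_A\cdot\vec S_B)$. One small slip that does not affect the argument: with $\alpha(\epsilon)=(1-3\epsilon/4)/(1+\epsilon/4)$, the image of $[-1,1]$ is $[1/5,7/3]$, not $[1/4,7/3]$.
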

\begin{proof}
Construct $(1+x \vec S_A \cdot \vec S_B)$ for some $x$ which has the same sign as $\tilde \epsilon$, with
$x$ sufficiently small compared to $\delta$, but $x$ not much smaller than $\delta^2$, i.e. $x=\Omega(\delta^2)$.  Then, take powers of this operation to obtain a suitable $1+\epsilon \vec S_A \cdot \vec S_B$.  Since $x$ is not much smaller than $\delta^2$, it takes at most $O(\delta^{-2})$ operations to do this.
\end{proof}
\begin{remark}
When constructing powers of the operation, it is more convenient to work with an exponentiated form of the operation: $1+x \vec S_A \cdot \vec S_B=\exp(y \vec S_A \cdot \vec S_B)$, where $y$ depends on $x$.
\end{remark}
\begin{remark}
Likely the dependence of the number of operations on $\delta$ can be greatly reduced.  We have given an inefficient construction that involves first constructing an operation with a very small $x$ and then taking powers of that operation.  One might instead apply operations with several different $\epsilon_i$ with different magnitudes to reduce the total.  We do not worry about that here.
\end{remark}
So,
\begin{lemma}
\label{lemmaite}
In PostSTP, we can approximate imaginary time evolution under a time-dependent Heisenberg Hamiltonian, up to inverse polynomial error.
\end{lemma}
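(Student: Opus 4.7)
The plan is to discretize and Trotterize the target evolution. I would first approximate the time-ordered exponential $T\exp(-\int_0^T H(t)\,dt)$ of a time-dependent Heisenberg Hamiltonian $H(t) = \sum_{i<j} J_{ij}(t)\, \vec S_i \cdot \vec S_j$ by a first-order Trotter-Suzuki product over $M$ time slices of width $\Delta t = T/M$, giving a product of factors of the form $\exp\bigl(-\Delta t\, J_{ij}(t_k)\, \vec S_i \cdot \vec S_j\bigr)$. With $M$ chosen polynomial in $N$, $T$ and $\max\|J_{ij}\|_\infty$, the Trotter error is inverse polynomial in operator norm.

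Next I would translate each such factor into an operation already available via \cref{sij}. Because $\vec S_i \cdot \vec S_j$ has only two eigenvalues, $1/4$ and $-3/4$, any analytic function of it is a linear combination of $\Id$ and $\vec S_i \cdot \vec S_j$. In particular $\exp(y\, \vec S_i \cdot \vec S_j)$ is proportional to $1 + x\, \vec S_i \cdot \vec S_j$, where $x$ is the unique real solution of $(1+x/4)/(1-3x/4) = e^y$; for the small $|y|=|\Delta t\, J_{ij}(t_k)|$ produced by the discretization, $x$ lies well inside $[-1,1]$. The overall positive scalar prefactor is irrelevant because postselection automatically renormalizes the state in PostSTP.

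Then, invoking \cref{sij} with an inverse-polynomial tolerance $\delta$, I would realize each required $1 + x\, \vec S_i \cdot \vec S_j$ using $O(\log(\delta^{-1})\delta^{-2})$ operations. By telescoping/triangle-inequality arguments, the product of at most $M \cdot O(N^2)$ approximate factors differs from the exact Trotterized evolution by at most $\poly(M,N)\cdot\delta$ in operator norm, provided partial products remain bounded. Choosing $\delta$ a sufficiently small inverse polynomial, the combined Trotter plus approximation error is inverse polynomial.

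The main obstacle is the non-unitary nature of the primitives: each $1+x\, \vec S_i\cdot\vec S_j$ has operator norm bounded by $1+|x|$, so naive error accumulation could be amplified when one renormalizes at the end. This is controlled because the accumulated operator norm grows at worst exponentially in the operation count while the denominator in the postselection is lower-bounded by $\Omega(\exp(-\poly(N)))$ via \cref{smallamp}; together with a polynomial bound on the total imaginary time $T$ and on $\max_t \|H(t)\|$, this converts the additive $\poly(M,N)\cdot\delta$ bound on the unnormalized approximation into an inverse-polynomial bound on the normalized output state, proving the lemma.
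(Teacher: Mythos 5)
Your proposal is correct and follows essentially the same approach as the paper: Trotterize the imaginary-time evolution, rewrite each Trotter factor $\exp(y\,\vec S_i\cdot\vec S_j)$ as $1+x\,\vec S_i\cdot\vec S_j$ (as the paper also notes in the remark after \cref{sij}), and realize it via \cref{sij} with inverse-polynomial tolerance. The only noteworthy difference is in closing the error analysis: the paper sidesteps the normalization issue by defining the error as a term $E(t)$ on the right-hand side of $\partial_t\psi = H(t)\psi + E(t)$ whose norm is polynomially small \emph{relative to} $\|\psi(t)\|$, whereas you bound operator-norm error on the unnormalized Trotter product and then invoke \cref{smallamp} to convert this into a bound on the normalized state — both are valid, but the paper's definitional choice makes that step unnecessary.
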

The proof of this lemma is immediate, once we define what we mean.  A ``Heisenberg Hamiltonian" means a Hamiltonian of the form $H=\sum_{i,j} J_{ij} \vec S_i \cdot \vec S_j,$ where $J_{ij}$ is some polynomially-bounded matrix.
By imaginary time evolution under such a Hamiltonian, we mean evolving an initial state under the equation
$\partial_t \psi=H \psi$, up to normalization.  By ``time-dependent", we allow $H(t)=\sum_{i,j} J_{ij}(t) \vec S_i \cdot \vec S_j,$ where $J_{ij}(t)$ is some polynomially bounded matrix which depends on $t$ and we allow
$\partial_t \psi(t) = H(t) \psi(t)$.  Finally, the inverse polynomial error is an error on the right-hand side of the evolution equation:
$\partial_t \psi=H(t) \psi + E(t),$ where $E(t)$ is a state whose norm may be made polynomially smaller than $\psi(t)$.

Then, to prove the lemma, we simply Trotterize\footnote{The evolution $e^{t (\sum_i H_i)}$ can be efficiently approximated by the Lie-Trotter product formula $(\prod_i e^{tH_i/n})^n$ for $n$ small enough.} the imaginary time evolution equation, and simulate the Trotter steps using postselection by picking $\epsilon$ to be polynomially small in the state $\psi_{resource}(\epsilon)$, and applying \cref{sij}.

This already immediately implies that
\begin{corollary}
PostSTP contains QMA.
\begin{proof}
By \cite{C16}, approximating the ground state of a Heisenberg Hamiltonian to inverse polynomial error is QMA-hard. The operator ${\rm exp}(-Ht)$ is a ground state projector and hence by approximating this imaginary evolution up to inverse polynomial error, we can estimate the ground state of this Hamiltonian.
\end{proof}
\end{corollary}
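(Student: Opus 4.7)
The plan is to combine \cref{lemmaite} with the QMA-hardness of the Heisenberg ground-state-energy problem from \cite{C16}. That reference produces, in polynomial time from any QMA instance $x$, a local Heisenberg Hamiltonian $H = \sum_{i,j} J_{ij} \vec S_i \cdot \vec S_j$ and a threshold $E^*$ with an inverse-polynomial promise gap $\Delta$: YES instances satisfy $E_0(H) \le E^* - \Delta$ while NO instances satisfy $E_0(H) \ge E^* + \Delta$. It thus suffices to design a PostSTP protocol that decides on which side of $E^*$ the ground-state energy lies.

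The central tool is imaginary-time evolution $e^{-Ht}$ as an approximate ground-state projector. Expanding the PostSTP initial state $|\psi_0\rangle$ (a product of singlets) in the eigenbasis of $H$ as $\sum_k c_k |E_k\rangle$, we have $\|e^{-Ht}|\psi_0\rangle\|^2 = \sum_k |c_k|^2 e^{-2 E_k t}$, which for polynomially large $t$ is exponentially dominated by the smallest $E_k$ with $c_k \neq 0$. By \cref{lemmaite}, PostSTP can implement this imaginary-time evolution up to inverse-polynomial error in polynomial time.

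Having projected (approximately) onto the ground state, the protocol reads off its energy with a single final $s/t$ measurement: pick a random pair $(i,j)$ with probability proportional to $|J_{ij}|$, measure $s/t$ on that pair, and flip the returned label when $J_{ij}<0$. Since $\vec S_i \cdot \vec S_j$ has eigenvalues $-3/4$ on singlets and $+1/4$ on triplets, the resulting $s$-probability is an affine function of $\langle H\rangle/\sum_{ij}|J_{ij}|$, which on the projected state is an affine function of $E_0$. YES and NO instances therefore produce final $s$-probabilities separated by an inverse-polynomial gap, which in a post-selected class such as PostSTP can be amplified to a constant gap by the standard trick of running many copies in parallel and postselecting on majority agreement.

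The main obstacle will be ensuring that the singlet-product initial state has nonzero and not-too-small overlap $c_0$ with the ground subspace of $H$, so that imaginary-time evolution actually homes in on $E_0$ rather than on some spurious low-energy state lying in an orthogonal symmetry sector. This should be handled by augmenting the reduction of \cite{C16} with a small number of ancillary qubits and Heisenberg terms chosen so that a singlet-product state is guaranteed to have nonzero overlap with the ground subspace; even if $|c_0|^2$ ends up being inverse-exponential, by \cref{smallamp} PostSTP tolerates exponentially small intermediate postselection probabilities, so no fundamental obstruction arises and the construction goes through.
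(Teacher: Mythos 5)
Your proposal follows the paper's argument: invoke the QMA-hardness of the Heisenberg ground-state-energy problem from \cite{C16}, use \cref{lemmaite} to implement imaginary-time evolution $e^{-Ht}$ as an approximate ground-state projector, and extract the energy. The paper's own proof is a two-sentence sketch, so your version is substantially more explicit; the extra detail is sound and in the spirit of what the paper intends. Two of your additions are worth a comment. First, the energy-readout step (sample a pair $(i,j)$ with probability $\propto |J_{ij}|$, measure $s/t$, use the fact that $\vec S_i\cdot\vec S_j$ takes values $-3/4$ and $1/4$ on $s$ and $t$) is exactly the kind of concrete protocol the paper omits; it produces an $s$-probability separated by an inverse-polynomial gap between YES and NO instances, and the subsequent constant-gap amplification is standard in a postselected class, though ``postselect on majority agreement'' could be spelled out more carefully than one line. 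Second, and more importantly, you correctly flag the overlap issue: since the initial state is a singlet product, it lies in the total-spin-$0$ sector, and $e^{-Ht}$ for an $SU(2)$-invariant $H$ cannot leave that sector, so one must ensure the relevant ground subspace of the Cubitt--Montanaro Hamiltonian intersects the spin-$0$ sector (or adjust the reduction accordingly). The paper silently assumes this is fine; your observation that \cref{smallamp} tolerates exponentially small overlap (and that ancillas can be adjoined to fix the sector) is the right way to close the gap. So: same route as the paper, with genuine and useful elaboration rather than a different method.
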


\subsection{Simulating PostBQP}
\label{postbqpsubsec}
However, our goal is not just to prove that PostSTP contains QMA, but rather than PostSTP is equivalent to PostBQP.

To do this, we again use results from \cite{C16}.  We use the result (see section 5.1 of that paper) that using Heisenberg interactions we can implement logical qubits (using three physical qubits for each logical) and obtain terms $X,Z$ on any given logical qubit as well as terms $XX,ZZ$ on pairs (see also \cite{DiVincenzo_2000} for an encoding using the Heisenberg interaction).
So, it suffices then to consider a model of computation in which we have qubits (which for the rest of this subsection refer to the {\it logical} qubits of \cite{C16}), and the ability to implement imaginary time evolution under time-dependent $X,Z,XX,ZZ$, up to inverse polynomial error.  Throughout this subsection, when we refer to time, we mean imaginary time.

By turning on an $XX$ term on a pair of qubits for a long time and then turning it off (while leaving other terms off), we can approximately project  onto the $XX=+1$ or $XX=-1$ eigenstates, and similarly using $ZZ$ terms we can approximately project onto $ZZ=+1$ or $ZZ=-1$ eigenstates.  Further, by turning on a sum of $X$ and $Z$ terms on a single qubit (while leaving other terms off), we can prepare an ancilla qubit in a state which is approximately any desired pure state $\cos(\theta) |0\rangle + \sin(\theta)|1\rangle$.

These abilities suffice.  First, we can approximately prepare a qubit in a $|+\rangle$ state, and then using it as an ancilla, we can approximately apply a CNOT from a source to a target by using the ability to approximately postselect on $ZZ=+1$ and $XX=+1$ eigenstates.  This is the same as we used in \cref{ifcg} from \cite{Xue_2013,Karzig_2017}.
Note importantly that here we are using imaginary time evolution to approximately apply a unitary gate.  This should be not be surprising; after all, the well-known idea of measurement based quantum computation \cite{Raussendorf_2003} uses a sequence of measurements to apply unitary gates.

Also, by preparing ancillas which are approximately in states $\cos(\theta) |0\rangle + \sin(\theta)|1\rangle$ for other choices of $\theta$, and using the CNOT gates and state injection, we can approximately implement unitary rotations to produce approximate rotations by $\exp(i \theta Y)$ where $Y$ is the single qubit Pauli $Y$.

The ability to do CNOT and rotations $\exp(i \theta Y)$ allows universal quantum computation.  
So, if we ignore issues of error (i.e., the fact that all our constructions only approximately gave these gates), we can implement arbitrary quantum circuits and further we can postselect on measurement outcomes since we can implement imaginary time evolution under $Z$, giving PostBQP.

The issue of error can be resolved by implementing a fault tolerant construction using these approximate gates. 
Note that the error in the gates can be made arbitrarily small.  Indeed, we can even make the error in individual gates $1/\poly(N)$ for any polynomial at the cost of a polynomial overhead, so we can make the error in gates polynomially smaller than the the inverse of the total number of gates!  Thus, if our goal were simply to simulate BQP, where we would be satisfied with an inverse polynomial error in the output probabilities of our quantum circuit, there would be no need for any fault tolerance.  However, since we want to simulate PostBQP, we need exponentially small errors.  So, some fault tolerant construction is needed.

Of course, the ability to make the error in individual gates polynomially small compared to the inverse of the total number of gates certainly simplifies the fault tolerant construction.  However, we claim that in fact the usual threshold theorems can be applied to our setting and so long as the error in individual gates is sufficiently small, we can make the error in logical operations exponentially small.

The usual threshold theorems involve replacing idealized unitary gates by CPTP maps that approximate (in diamond norm) the desired unitary operations.  Instead, we are replacing idealized unitary gates by linear operators that act on pure states (rather than mixed states) that are close in operator norm to the desired unitary.  However, we now show that the usual threshold theorems can be adapted to this case.

Consider any given gate in the circuit, which ideally would be implemented by some unitary we will call $U$.  Suppose instead we implement some map $A$ (which is a linear operator on pure state) with $\Vert A-U \Vert \leq \epsilon$, where $\Vert \cdots \Vert$ is the operator norm.  Then, $(1-\epsilon)A$ has singular values bounded by $1$ and if we define $B$ to be any operator such that $B^\dagger B =I-(1-\epsilon)^2 A^\dagger A$ then
$${\cal E}(\rho) \equiv (1-\epsilon)^2 A \rho A^\dagger + B \rho B^\dagger$$
is a CPTP map with two Krauss operators, $(1-\epsilon) A,B$.

Making such a replacement for all gates in the circuit, we get a ``noisy circuit" where every gate has two Krauss operators, with the first Krauss operator being $O(\epsilon)$ close to the ideal unitary.  The usual threshold theorems apply to this noisy circuit for sufficiently small $\epsilon$ showing that the logical error is exponentially small.  Then, the situation relevant for us is one in which every time a CPTP map is applied for some gate, we pick the first Krauss operator (that is, $(1-\epsilon) A$ in this case), rather than the second.  
This can be physically thought of as selecting a particular noise pattern.

So, we ask: if we select the first Krauss operator for every CPTP map in the noisy circuit, is the error still exponentially small?  Intuitively, there is no problem: the first Krauss operator is closer to what we want than the second Krauss operator, so the situation in which we select that Krauss operator every time should be even better than a random choice.

To prove this, of course one could reprove the threshold theorem, using our gates (which are not noisy in that they map pure states to pure states but which still only approximate the desired gates).  Indeed, this could be done simply by following through some existing construction and showing that the error is exponentially reduced if the first Krauss operator is selected every time.  However, we would like to minimize our effort, and show that the error is exponentially small using the standard threshold theorem as a ``black box", without delving into the proof.
To do this, we use a simple trick.  Use some hierarchical construction to prove the threshold theorem, see for example \cite{Gottesman_2010} for a review of various constructions.  Using such a hierarchical construction of the threshold theorem, one considers codes of some $O(1)$ size, and proves that each step of the hierarchy reduces the error rate.  If the error rate is $\epsilon$ at some level, then it is $O(\epsilon^k)$ for some $k>1$ at a higher level of the hierarchy and so for sufficiently small $\epsilon$ the error rate reduces. 
We regard this error rate as a sum over different trajectories, corresponding to different choices of Krauss operators in each CPTP map.  The contribution of the trajectory where we select the first Krauss operator in every map is $1-O(\epsilon)$, with the constant hidden in the big-O notation depending on the size of the code.  So, for sufficiently small $\epsilon$, the contribution of the that trajectory is the dominant one and so the error rate for that trajectory at the next level of the hierarchy is also $O(\epsilon^k)$, i.e., the hierarchical construction reduces error rate for sufficiently small $\epsilon$ for this case as well.

\begin{remark}
Ref.~\cite{Cubitt_2018}  shows how to simulate 2D topological phases from imaginary time Heisenberg evolution. Anyons may be built into these simulations as boundary conditions, and time dependence of the Hamiltonians allows these anyons to be braided adiabatically, and then, later, collective (topological) charge states to be measured by simulating interferometry. In other words, once one can build imaginary time Heisenberg evolution (as in \cref{lemmaite}), one can simulate the complete operation of a topological quantum computer. This provides an alternative path from \cref{lemmaite} to showing that PostSTP equals PostBQP,  the post-selection in the conclusion stemming from our ability to post-select measurement outcomes of the simulated topological quantum computer, and the topological protection of the state allowing us to make errors exponentially small.  While this is an extremely concise argument in outline, some details regarding efficiency should be filled in; this is why we gave the more explicit argument first.
\end{remark}

\begin{acknowledgements}
We would like to thank Dave Bacon for pointing out the relevance of \cite{Rudolph_2005}, and anonymous referees for helpful comments to improve our paper. The first named author would like to thank the Aspen Center for Physics for their hospitality. The third named author would like to acknowledge the support of the Perimeter Institute for Theoretical Physics and Microsoft. Research at Perimeter Institute is supported by the Government of Canada through Innovation, Science and Economic Development Canada and by the Province of Ontario through the Ministry of Research, Innovation and Science. The experiments were conducted using Microsoft computational resources. 
\end{acknowledgements}

\bibliographystyle{plainnat}
\bibliography{st}
\onecolumn\newpage
\appendix

\section{Approximating rotations by powers of a fixed rotation}
\label{irrmeasure}
Let $\theta$ be an irrational multiple of $2\pi$.  Then, the integer multiples of $\theta$ are dense in the unit circle.  This has the implication that, given the ability to implement rotation $\exp(i Z \theta)$ on a single qubit, one can approximate the rotation $\exp(i Z \phi)$ for any $\phi$ to any desired accuracy by repeating the operation $\exp(i Z \theta)$ sufficiently many times.

However, it becomes interesting to ask: {\it how many} times must one repeat the operation $\exp(i Z \theta)$ to attain some given accuracy. This is equivalent to the question: given some $\phi$ and some $\delta>0$, what is the smallest $n$ such that $n \theta$ is within distance $\delta$ of $\phi$, modulo $2\pi$.

The answer to the question depends on the {\it irrationality measure} $\mu$ of $\theta/(2\pi)$.  The irrationality measure $\mu$ of a number $x$ is defined to be the smallest number such that for any $\epsilon>0$, we have
$$|x-\frac{p}{q}|>\frac{1}{q^{\mu+\epsilon}}$$
for all sufficiently large integers $q$, for all $p$.

Some numbers have infinite irrationality measure; these are called Liouville numbers.  As an example, consider a number such as $\sum_{j>0} 10^{-f(j)}$, where $f(j)$ is some fast growing function such as $j!$.  Almost all (in the sense of measure theory) real numbers have irrationality measure $2$.  The arctangent of any algebraic number has finite irrationality measure \cite{imeas}.

\begin{lemma}
Assume $\theta/(2\pi)$ is irrational with irrationality measure $\mu$.
Let us define the distance between angles in the obvious way, as the shortest distance between them on the circle; we write this distance using an absolute value sign.

Then, for any $\epsilon>0$, for any $\phi$ and any $\delta>0$, there is some $m$ with
$|m\theta-\phi|\leq \delta$ and with
$m\leq O(\frac{1}{\delta^{\mu+\epsilon}})$, where the constant hidden in the big-O notation depends on $\epsilon$.
\begin{proof}
Consider the set $\{0,\theta,2\theta,\ldots,k\theta\}$, where $k=\lceil 2\pi /\delta \rceil$.
All elements of this set are distinct.  So, there must be two different elements, call them $j_1 \theta,j_2\theta$, with $|j_1 \theta-j_2\theta|\leq (2\pi)/k$.
Let $q=j_2-j_1$, so $q\theta$ is within distance $(2\pi)/k\leq \delta$ of $0$.

By the definition of irrationality measure, multiplying both sides by $2\pi q$, we get
$|q\theta-2\pi p|> (\frac{2\pi}{q})^{\mu+\epsilon-1}$ for sufficiently large $q$.
So, for sufficiently large $q$, we have $q\theta$ at least distance $(\frac{2\pi}{q})^{\mu+\epsilon-1}$
from $0$.  For smaller $q$, it is possible that $q\theta$ is closer than that, but there are only finitely many such $q$, so $q\theta$ is at least $\theta_{min}$ for some $\theta_{min}>0$ depending on $\epsilon$.

Now consider the sequence of angles $0,q\theta,2q\theta,3q\theta,\ldots$ .  These angles change by at most $\delta$ from one to the next, so some angle in the sequence is within $\delta$ of $\phi$.

At the same time, they change by at least $\Delta\equiv{\rm min}(\theta_{min},(\frac{2\pi}{q})^{\mu+\epsilon-1})$, so we only
need to consider terms up to $\lceil2\pi \Delta^{-1}\rceil$ in the sequence to get an angle within $\delta$ of $\phi$.  Note that $q$ is bounded by $2k$, so it suffices to consider terms up to $O(k^{\mu+\epsilon-1})$ in the sequence.  Since $q$ is $O(k)$, this means we consider $m$ up to $O(k^{\mu+\epsilon})$.
\end{proof}
\end{lemma}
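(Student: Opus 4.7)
The plan is a classical Diophantine approximation argument. First I would use a pigeonhole step to produce a ``small step'' multiple $q\theta$ that returns close to $0 \bmod 2\pi$; then I would walk in multiples of this step until landing within $\delta$ of $\phi$. The final count $m$ will be the product of $q$ and the number of $q\theta$-steps required, so the whole estimate reduces to simultaneously upper and lower bounding $|q\theta \bmod 2\pi|$, with the lower bound coming from the irrationality measure hypothesis.

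Concretely, set $k = \lceil 2\pi/\delta \rceil$ and consider the $k+1$ distinct angles $0, \theta, 2\theta, \ldots, k\theta$ on the unit circle. Partitioning the circle into $k$ arcs of length $2\pi/k \le \delta$ and applying pigeonhole, two of these angles differ by at most $\delta$, so there is some integer $q$ with $1 \le q \le k = O(1/\delta)$ and $|q\theta \bmod 2\pi| \le \delta$. This gives the desired upper bound on the step size.

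Next I would invoke the irrationality measure: by definition, for every $\epsilon > 0$ and all sufficiently large integers $q$ (depending on $\epsilon$), one has $|\theta/(2\pi) - p/q| > 1/q^{\mu+\epsilon}$ for every integer $p$. Multiplying through by $2\pi q$ yields $|q\theta - 2\pi p| > 2\pi/q^{\mu+\epsilon - 1}$, i.e.\ $|q\theta \bmod 2\pi| > 2\pi / q^{\mu+\epsilon-1}$. For the finitely many small $q$ below this threshold, $|q\theta \bmod 2\pi|$ is bounded below by a positive constant $\theta_{\min}$ depending only on $\epsilon$. Let $\Delta$ denote the resulting uniform lower bound on $|q\theta \bmod 2\pi|$.

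Finally, consider the sequence $0, q\theta, 2q\theta, \ldots$ modulo $2\pi$. Consecutive terms advance by at most $\delta$ (from the pigeonhole upper bound) and at least $\Delta$ (from the irrationality lower bound), so within $\lceil 2\pi/\Delta \rceil = O(q^{\mu+\epsilon-1}) = O(k^{\mu+\epsilon-1})$ steps we pass within $\delta$ of $\phi$. The required index is $m = O(q \cdot k^{\mu+\epsilon-1}) = O(k^{\mu+\epsilon}) = O(1/\delta^{\mu+\epsilon})$. The one place to be careful is cleanly separating the ``small $q$'' regime, where the irrationality measure bound is not yet active, from the ``large $q$'' regime, and ensuring the hidden constants absorb this into a dependence on $\epsilon$ alone; this is the only real obstacle and it is handled by noting there are only finitely many such small $q$.
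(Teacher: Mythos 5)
Your proposal is correct and follows essentially the same route as the paper's own proof: pigeonhole on $\{0,\theta,\dots,k\theta\}$ to find a small step $q\theta$, the irrationality-measure hypothesis to lower-bound that step (handling the finitely many small $q$ separately), and then a walk in multiples of $q\theta$ to reach within $\delta$ of $\phi$. Your algebra in the step $|q\theta - 2\pi p| > 2\pi/q^{\mu+\epsilon-1}$ is in fact a bit cleaner than the paper's $(2\pi/q)^{\mu+\epsilon-1}$, though the two differ only by a constant factor and yield the same asymptotic bound.
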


\section{No-leakage and equiangular sequences}

Our paper has antecedents in a previous work \cite{fisher2015quantum}, where the possibility of \textit{chemically} protected quantum computation \cite[section 8]{freedman2019quantum} was introduced; the proposal has been to leverage the symmetry of small molecules for quantum computation, by exploiting a coupling between orbital angular momentum and nuclear spin. Such a coupling may prove useful in implementing STP, but in this paper we have considered the abstracted problem where this simplest possible measurement, $s/t$, of a collective spin state is taken as the primitive.

In contrast to the approach presented in the main text, a more geometrical approach for analyzing the computational power of the $s/t$ model is to consider \textit{no-leakage} or \textit{equiangular} sequences. This approach also originates from \cite{freedman2019quantum} where it is shown that the notion of equiangularity is key to many measurement based models that imitate unitary evolution. Indeed, the main result there, involves a model where the measurement based online part of the computation is done through equiangular sequences, as these are sequences in which any undesired outcome of a measurement can be corrected without leakage or starting the computation all over again.

In \cite{freedman2019quantum}, equiangularity of a pair of projection is defined. Here, we need to define equiangularity for an \textit{ordered} pair and generalize some of the results in  \cite{freedman2019quantum}. From now on, a projection $P$ will refer to the operator and the corresponding image (hyper-)plane.

\begin{definition}
We call the \textit{ordered} pair of projection $(Q,P)$ equiangular if $PQP = \alpha^2 P$ for some $\alpha>0$.
\end{definition}

\begin{remark}
\textbf{Equi}angular is a suitable name for the above algebraic equation. In fact, if $PQP = \alpha^2 P$ for some $\alpha>0$, then $\alpha$ is equal to $\cos(\theta)$ with \textbf{all} dihedral angles from $P$ to $Q$ being equal to $\theta$. Also note that $(Q,P)$ being equiangular implies the same for $(1-Q,P)$. The proof of all these involves elementary linear algebra; see \cite{freedman2019quantum}.
\end{remark}

\begin{remark}
An intuitive consequence of the above is that the application of $Q$ to a computational space defined in the plane $P$ is a unitary embedding up to some positive scale, meaning restricted to the image of $P$, the map is a unitary up to some positive scale (see \cite{freedman2019quantum} for the detailed proof).
\end{remark}

The above remark can also be paraphrased as $Q$ defines a \textit{no-leakage} map from the plane $P$ to the plane $Q$. 

Suppose we want to apply the projective measurement $\{Q,1-Q\}$ following $P$. Then, we can always force the outcome to be $Q$. Indeed, assume we get $(1-Q)P$. We try $\{P,1-P\}$ and get either $P(1-Q)P \propto P$ (so back at $P$), or $(1-P)(1-Q)P$. In the latter case, we try $\{Q,1-Q\}$ again, and get either $Q(1-P)(1-Q)P = -QP(1-Q)P \propto QP$ (the desired result), or $(1-Q)(1-P)(1-Q)P = -(1-Q)(1-P)QP = (1-Q)PQP \propto (1-Q)P$, which means we are back at our first unsuccessful measurement. Notice the last case happens with a fixed probability depending on $\alpha$. Thus, repetition always exponentially suppresses the probability of failure.

The above process is the key reason behind the use of equiangular ordered pairs in a measurement model, as it ensures \textbf{recovery} in case of failure. Let us now define an equiangular \textit{sequence} of projections, where the same recovery protocol works.

\begin{definition}
An equiangular sequence $(P_k, \ldots ,P_1)$ satisfies:
\begin{align*}
    P_iP_{i+1}P_i\ldots P_1 = \alpha_{i,i+1}^2P_i\ldots P_1,  \ \ \alpha_{i,i+1} >0 , \ \ \text{for } 1\le i \le k-1.
\end{align*}
\end{definition}
\begin{remark}
In an equiangular sequence, applying $P_{i+1}$ onto the image of $P_i\ldots P_1$ is a unitary embedding up to some positive scale. The argument is simply a more involved version of the ordered pair case. Recovery is similarly ensured, i.e. we can always force $P_{i+1}$ on $P_i\ldots P_1$.
\end{remark}

Next, we define a no-leakage sequence. Intuitively, this is a sequence where the measurements do not ``leak'' quantum information to the environment.  Precisely, we mean:
\begin{definition}
The sequence $(P_k, \ldots ,P_1)$ is no-leakage if the projection $P_{i+1}$ is a unitary embedding up to some scale from the image of $P_i\ldots P_1$ for all $1 \le i \le k-1$ to a subspace of $P_{i+1}$.
\end{definition} 

Is a no-leakage sequence necessarily equiangular? The answer is yes when $k=2$. In general, linear algebra tells us that the orthogonal eigenspaces $E_i$ of $P_iP_{i+1}P_i$ that are a subspace of $P_i$ decompose $P_i = \sum E_i$ to subspaces that have all dihedral angles with $P_{i+1}$ equal. It is not hard to show that in an equiangular sequence, the image of $P_i\ldots P_1$ falls inside one of those eigenspaces and $\alpha_{i,i+1}$ is the cosine of that dihedral angle. However, in a no-leakage sequence, the only requirement is that the image of $P_i\ldots P_1$ is a plane with dihedral angles all equal with $P_{i+1}$, and that plane could be formed by some complicated combination of vectors in different eigenspaces. 

In contrast to the equiangular case, there is generally no guarantee for recovery in a no-leakage sequence. Still, such sequences are of interest in the case of PostSTP, since by definition, we are allowed to force the outcome.

Now, we investigate the computational power of each type of sequence, first equiangular sequences of $s/t$. For a computation to be done on a spin-0 subspace of $2N$ spins, called $spin_0(2N)$, initialized at a singlet dimerization, we use an even number $2M$ of ancillas, similarly initialized at a singlet dimerization. The sequence starts and ends with the projection onto a particular singlet state for the ancilla pairs, i.e. $\prod_{\text{ancilla pairs}} s $, and in between, $s/t$ measurements can be made on the $N+M$ pairs. This ensures that we start and end at the same computational space $spin_0(2N)$. Hence we define:

\begin{definition}
An $s/t$ equiangular sequence refers to an equiangular sequence of the form 
$$(\prod_{\text{ancilla pairs}} s , P_k,\ldots, P_1, \prod_{\text{ancilla pairs}} s ),$$ 
where $P_i$ are $s$ or $t$ on a pair from the $2(N+M)$ spins.
\end{definition}

Equiangularity forbids a dimension decrease of the computational space, as the map at each stage must be unitary. Notice dimensional decrease is even 'worse' than leakage, as leakage can occur if the map is invertible but not unitary. Thus at each stage of the computation, we must observe a matching of $M$ pairs of spins on which a projection have been made, meaning that each projection $P_i$ is applied on a pair that shares exactly one spin with the previous $M$ pairs; sharing no spins would decrease dimension, and sharing both spins would either decrease dimension or be a redundant projection. This is shown in more details in the proof of the next lemma.

For example, let us denote all ancillas by $a_1,\ldots, a_{2M}$ and the computational spins by $c_1,\ldots,c_{2N}$. At first, $a_{2i-1},a_{2i}$ are paired in a singlet state. Then, it is not hard to show that $P_1$ can be any $s$ or $t$ on any pair of the form $a_ic_j$. If $M=1$ (only two ancillas), then each step involves applying $s$ or $t$ on a pair sharing one spin with the previous measurement. Of course, to satisfy equiangularity. there may not be complete freedom in choosing the other spin. If $M>1$, then assuming $P_1$ applies on $a_1c_1$, $P_2$ must share a spin with $\{a_1,c_1,a_3,a_4\}$; notice that $a_2$ has been discarded, as it was paired with $a_1$.
\begin{remark}
We may allow consecutive projections onto pairs that are disjoint, if the goal is to apply an equiangular sequence on a particular \textbf{subspace} of $spin_0(2N)$, i.e. the computational space is a subspace of $spin_0(2N)$. We explored this approach in our simulations to some extent, but without success.
\end{remark}

We now show that no-leakage sequences (and therefore equiangular sequences) with one pair of ancillas only perform signed permutations. This limitation is a reason behind the use of alternative/more analytical methods in \cref{lemxzst}. 

\begin{lemma}
With $M=1$ pair of ancillas, every $s/t$ no-leakage sequence is some signed permutation on the $2N$ computational spins.
\end{lemma}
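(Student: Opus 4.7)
The plan is to expand each projection via the SWAP identities $s_{ij} = \tfrac{1}{2}(I - \mathrm{SWAP}_{ij})$ and $t_{ij} = \tfrac{1}{2}(I + \mathrm{SWAP}_{ij})$, so that $P_k \cdots P_1 = \tfrac{1}{2^k}\sum_\pi c_\pi\,\pi$ is a signed sum over permutations of the $2N+2$ spins. Sandwiching with the initial and final ancilla-singlet projections, the net operation on $\mathrm{spin}_0(2N)$ becomes $T = \tfrac{1}{2^k}\sum_\pi c_\pi \langle s|_{a_1 a_2}\pi|s\rangle_{a_1 a_2}$; the lemma asserts that this sum reduces to $\pm \sigma|_{\mathrm{comp}}$ for a single permutation $\sigma$ of the $2N$ computational spins.

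The central tool is a ``singlet-teleportation'' identity, verified by direct expansion of $|s\rangle = \tfrac{1}{\sqrt 2}(|01\rangle - |10\rangle)$:
\[
s_{a_1 c_i}\bigl(|s\rangle_{a_1 a_2}|\phi\rangle\bigr) \;=\; \tfrac{1}{2}\,\mathrm{SWAP}_{a_2 c_i}\bigl(|s\rangle_{a_1 a_2}|\phi\rangle\bigr),
\]
together with its analogues for any current singlet position. Using these I would classify each term $\langle s|_{a_1 a_2}\pi|s\rangle_{a_1 a_2}$ by the action of $\pi$ on $\{a_1,a_2\}$: if $\pi$ preserves the pair setwise (Type A), the term is $\pm \pi|_{\mathrm{comp}}$; if $\pi$ moves exactly one ancilla into the computational part (Type B), it is a partial contraction identifying two computational qubits through the ancilla; if $\pi$ moves both ancillas into computational positions (Type C), it is a singlet projection on two computational qubits. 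The lemma reduces to showing that no-leakage forces Type B and Type C contributions to cancel and the Type A contributions to combine into a single signed permutation.

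The main obstacle is handling triplet measurements, which defeat a naive ``single permutation at each step'' inductive invariant. For instance, the very first step $t_{a_1 c_i}|s\rangle_{a_1 a_2}|\phi\rangle = |s\rangle_{a_1 a_2}|\phi\rangle - \tfrac{1}{2}|s\rangle_{a_1 c_i}|\phi'\rangle$ is a genuine superposition of a Type-A and a Type-B state, yet its norm scales uniformly by $\tfrac{3}{4}$ and so the step is no-leakage. I would therefore argue globally rather than inductively: enumerate the permutations $\pi$ surviving with nonzero $c_\pi$ and use that with only one ancilla pair the ``configuration space'' of ancilla trajectories is effectively one-dimensional, so the constraint that the ancillas must start and end in the singlet $|s\rangle_{a_1 a_2}$ leaves only a single Type A permutation (up to the signs from internal $a_1 \leftrightarrow a_2$ swaps, which by antisymmetry of the singlet contribute $-1$). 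Cancellation of Type B/C terms would then follow from the $\pm$ pairings in the $s,t$ expansion restricted to the same underlying ancilla trajectory; making this combinatorial counting precise, and ruling out the genuinely interfering superpositions allowed by isolated triplet steps, is the technical heart of the argument.
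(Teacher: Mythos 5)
There is a genuine gap in your proposal: you lay out a strategy (expand every $s/t$ projection into $\frac{1}{2}(I\pm \mathrm{SWAP})$, classify the resulting permutation terms as Type A/B/C according to where they send $a_1,a_2$, and assert that no-leakage forces Types B and C to cancel while Type A collapses to a single signed permutation) but you never establish the cancellation or the counting. You acknowledge this yourself: ``making this combinatorial counting precise, and ruling out the genuinely interfering superpositions allowed by isolated triplet steps, is the technical heart of the argument.'' That is exactly the step that is missing. The claim that ``with only one ancilla pair the `configuration space' of ancilla trajectories is effectively one-dimensional'' is an intuition, not a lemma; after even one triplet outcome the state is a superposition over ancilla locations, and you give no mechanism by which no-leakage collapses that superposition to a single term in the sum $\sum_\pi c_\pi \langle s|_{a_1a_2}\pi|s\rangle_{a_1a_2}$. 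Nothing you have written distinguishes a no-leakage sequence from a leaking one at the algebraic level, so the hypothesis of the lemma is never actually invoked.

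For comparison, the paper proves this by a direct step-by-step case analysis rather than a global expansion. It first shows the only first move that neither collapses dimension nor is the identity is an ancilla--computational measurement, and that the singlet outcome is just teleportation. It then assumes the triplet outcome and analyzes the next move: a measurement on two fresh computational spins drops dimension; a measurement on $\{a_1,a_2\}$ or $\{c_1,a_2\}$ with singlet outcome undoes everything, and with triplet outcome the state $(1+\mathrm{SWAP}(c_1,a_1))\psi$ may be replaced by $\mathrm{SWAP}(c_1,a_1)\psi$ because $\psi$ is annihilated by the $a_1,a_2$ triplet projector, reducing again to a permutation; and the remaining cases ($p\in\{a_1,c_1\}$ or $p=a_2$ paired with a fresh $c_j$) are shown to leak by exhibiting correlations or a fixed outcome. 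Your global SWAP-expansion route is genuinely different and potentially elegant, but without the cancellation lemma it is only a plan. If you want to pursue it, you would need an inductive invariant on the partial products $P_i\cdots P_1|s\rangle_{a_1a_2}$ that tracks exactly which permutations appear with nonzero coefficient and how the no-leakage condition (``$P_{i+1}$ acts as a unitary embedding on the current image'') constrains the next expansion -- which, in effect, reconstructs the paper's case analysis in algebraic form.
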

\begin{proof}
Let us assume $N=2$. Ancillas are $a_1,a_2$ and computational spins are $c_1,...,c_4$. At the start, up to permutation, the only possible measurement is on $a_1,c_1$. Indeed, if we measure $a_1,a_2$ then it leaves the state unchanged and if we measure any $p,q$ with $p,q \in \{c_1,\ldots,c_4\}$, then it decreases the dimension of the computational space.  So we must measure one ancilla and one non-ancilla; without loss of generality, let it be $a_1$ and $c_1$.
 
If the outcome of the $a_1,c_1$ measurement is singlet, then the $a_1,a_2$ ancilla pair has simply been replaced by an $a_1,c_1$ ancilla pair with $c_1$ being teleported to $a_2$. This is a simple permutation. So, we may assume the outcome is triplet.

What is the next possible measurement?  It can not be a pair $p,q\in \{c_2,c_3,c_4\}$, as that would again decrease dimension. One may readily see this as spins $c_2,c_3,c_4$ have not been touched yet, so for some initial states it is possible that such a pair is definitely in a singlet, but not for all.

Notice the state after the first measurement is symmetric under $a_1$ and $c_1$ interchanged, as they are a triplet. So suppose the pair $a_1,a_2$ or $c_1,a_2$ is measured. If the outcome is a singlet, the whole sequence has gained nothing: we return to $a_1,a_2$ being an ancilla in a singlet, and the initial spin $c_1$ is returned back to itself after this sequence. So, suppose we measure $a_1,a_2$ and the outcome is a triplet. 

To analyze this situation, let us make some notation for the state after the first measurement. After the first measurement, the state is $(1+\Sw(c_1,a_1)) \psi$ where $\psi$ is the initial state.  The state $\psi$ is annihilated by the projection onto $a_1,a_2$ triplet. So, we can assume the state after the first measurement is $\Sw(c_1,a_1) \psi$. However, this is equivalent to the initial state up to permutation of spins (i.e. we can replace the first measurement by a permutation) so nothing is gained.

The next possibility for the second measurement is to measure $p,q$ with $p=a_1$ and $q \in \{c_2,c_3,c_4\}$. Equivalently, we could pick $p=c_1$, as $a_1,c_1$ are symmetric. This case also leaks information, though a more detailed check is needed. If spin $c_1$ started out as some $\sigma$ at the start of the sequence, then spin $a_1$ and spin $c_1$ both are “correlated” with that initial $\sigma$ after the first measurement. Therefore, measurement on such a $p,q$ pair leaks information (this can also be checked on a computer).

Finally, we might pick $p=a_2$ and $q\in \{c_2,c_3,c_4\}$ but this also leaks information. Say $q=c_2$ is picked. If at the start of the sequence, we had spins $c_1,c_2$ in a singlet, then a measurement of $a_2,c_2$ will always yield triplet, therefore leakage occurs in this case as well.

Notice the proof does not rely on $N=2$ and works for all $N$.
\end{proof}
\begin{remark}
Given \textbf{equiangular} sequences, computer search, of hundreds of thousands iterations for ten different random seeds for producing random $s/t$ sequences, shows that increasing the number of ancillas does not make any difference in the statement of the theorem above.
\end{remark}

As previously mentioned, there could be interesting operators for PostSTP from no-leakage sequences. In fact, by increasing the number of ancillas, the previous lemma no longer holds:
\begin{lemma}
With no-leakage sequences, we can obtain an infinite order unitary on $2N=4$ spins given $2M=6$ ancillas and a unitary which is not a signed permutation given $2M = 4$ ancillas.
\end{lemma}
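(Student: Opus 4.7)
The plan is to produce the two required sequences explicitly. Since the dimension of $\mathrm{spin}_0(4)$ equals the Catalan number $C_2=2$, any no-leakage $s/t$ sequence that begins and ends with the ancilla singlet dimerization induces a linear map on a two-dimensional space, so the lemma reduces to exhibiting (a) a $2\times 2$ scalar-unitary that lies outside the eight signed permutations, and (b) a $2\times 2$ unitary whose eigenvalues are not roots of unity.

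The framework I would use is the Temperley--Lieb diagrammatic calculus on singlet subspaces: up to a positive scalar, each $s$-projection is the cup-cap joining the two projected qubits, while $t = 1-s$ expands as a signed combination of diagrams. Composing the initial singlet dimerization, the sequence $P_1,\ldots,P_k$, and the final ancilla dimerization yields a planar tangle whose evaluation, using the loop value $d=2$ for spin-$1/2$, is a rational $2\times 2$ matrix in the natural non-crossing basis of $\mathrm{spin}_0(4)$. Checking the no-leakage condition at each step reduces to verifying that the Gram matrix of a small explicit family of diagrams is a positive scalar times the identity, which is a finite algebraic computation.

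With this setup both parts become explicit. For $2M=4$, the search is for a short sequence whose induced matrix has an entry in $\mathbb{Q}\setminus\{0,\pm 1\}$, which is automatically not a signed permutation on the two-dimensional computational space. For $2M=6$, the search is for a sequence whose induced unitary, after rescaling, has trace $2\cos\theta$ with $\cos\theta \in \mathbb{Q}\setminus\{0,\pm\tfrac{1}{2},\pm 1\}$; Niven's theorem then certifies that $\theta$ is an irrational multiple of $\pi$ and so the unitary has infinite order. Natural candidates come from weaving extra ancilla singlets into the computational pairing pattern through a mix of $s$ and $t$ outcomes, so that each $t$ absorbs a rational factor (coming from $t = 1-s$ together with loop values $d=2$) into the $2\times 2$ transition amplitude.

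The principal obstacle is that the no-leakage condition is restrictive: most short $s/t$ strings either drop rank or fail the isometry property at some step, so the space of candidate sequences is sparse and the construction is not parametric. In practice the sequences are located by computer enumeration over a few tens of projections, with each step checked symbolically against the no-leakage condition; once a candidate is exhibited, its induced matrix and the no-leakage property can both be confirmed by a routine (if tedious) hand computation, making the final proof a matter of displaying the sequence and the accompanying verification tables.
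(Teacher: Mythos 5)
Your overall strategy --- exhibit explicit $s/t$ sequences found by computer search, compute the induced $2\times 2$ map on $\mathrm{spin}_0(4)$ diagrammatically, and certify the required properties --- is essentially what the paper does. The paper's proof literally consists of the statement ``by computer search'' followed by the two sequences, with the certification being that one has eigenvalues $\frac{3\sqrt{3}}{2\sqrt{7}}\pm\frac{1}{2\sqrt{7}}i$ (an irrational angle) and the other has order $12$. So the route is the same; the trouble lies in your proposed certificates, and both of them would derail the search if followed literally.

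For the infinite-order case your criterion is Niven's theorem applied to a sequence whose unitary has $\cos\theta\in\mathbb{Q}\setminus\{0,\pm\tfrac12,\pm1\}$. The eigenvalues that the paper's sequence actually produces have real part $\frac{3\sqrt{3}}{2\sqrt{7}}$, which is irrational, so this example falls outside the range of Niven's theorem entirely. This is not an accident: the raw Temperley--Lieb matrix has rational trace and rational determinant, and after rescaling to determinant one the cosine becomes $\mathrm{tr}/(2\sqrt{\det})$, which generically picks up a square root. Requiring the determinant to be a perfect square is a strong extra constraint and you have no evidence that a sequence satisfying it exists with $2M=6$ ancillas. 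To certify the sequence the paper actually found you need a different tool, namely that $\arccos$ (or $\arctan$) of a non-degenerate \emph{algebraic} number is an irrational multiple of $\pi$ --- which is what the paper's citation \cite{mo} supplies, and what it had already used for $\arctan(1/3)$ in \cref{uc}.

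For the not-a-signed-permutation case you assert there are ``eight signed permutations'' and propose to certify escape by finding an entry in $\mathbb{Q}\setminus\{0,\pm1\}$. The count is wrong: the permutations of the four computational spins act on $\mathrm{spin}_0(4)$ through the two-dimensional irrep of $S_4$, whose kernel is the Klein four-group, so the image is isomorphic to $S_3$, and with the overall sign one gets a group of order $12$ (not the order-$8$ dihedral group of literal signed $2\times2$ permutation matrices that the number eight suggests you had in mind). Your entry criterion may nevertheless be salvageable in the specific non-crossing cup-diagram basis (there the generating transpositions act by matrices such as $\bigl(\begin{smallmatrix}-1&1\\0&1\end{smallmatrix}\bigr)$ and one can check that all twelve group elements land in $\{0,\pm1\}$ entries), but you would need to actually verify this before the criterion is a certificate, and the verification depends delicately on the normalization of the diagram basis. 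The paper instead uses an invariant certificate: the induced unitary has order $12$, while $S_3\times\{\pm1\}$ has maximal element order $6$. That argument is basis-independent and does not require the Gram-matrix bookkeeping your proposal would entail.
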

\begin{proof}
The proof is by computer search. The following is no-leakage:
\begin{align*}
(s_{a_1,a_2}s_{a_3,a_4}s_{a_5,a_6}, t_{a_6,c_1},s_{a_2,a_6},t_{a_6,c_2},s_{a_4,a_6},t_{a_6,c_4},s_{a_1,a_2}s_{a_3,a_4}s_{a_5,a_6}),
\end{align*}
where $s_{-},t_{-}$ is singlet or triplet projection on pair $-$. The sequence gives a unitary with infinite order on $spin_0(4)$, as its eigenvalues are $\frac{3\sqrt{3}}{2\sqrt{7}} \pm \frac{1}{2\sqrt{7}} i$, which correspond to an irrational angle \cite{mo}.

The following sequence of projections is no-leakage and gives a unitary with order 12 on the space $spin_0(4)$, hence can not be a signed permutation on 4 spins:
\begin{align*}
    (s_{a_1,a_2}s_{a_3,a_4}, t_{a_3,c_3},t_{a_2,c_2},s_{a_2,c_3},t_{a_3,c_2},s_{a_3,c_4},     s_{a_1,a_2}s_{a_3,a_4}).
\end{align*}
\end{proof}

Unlike the case of $N=2$, computer search surprisingly was unable to get any interesting operator for $N>2$, even when using higher number of ancillas and different variations on the operators that can be used (for example, simultaneous projection of triplet and/or singlet on different pairs). Thus, we are unable to use no-leakage sequences to prove PostSTP = PostBQP.

We can relax the requirement even further, by simply demanding the result of the sequence to be no-leakage, i.e. give a unitary operator, thereby not requiring \textit{each} step to be no-leakage. One should imagine an elliptical distortion created by one projection being canceled by a subsequent one. This is still allowed as in PostSTP we can force the outcome. This makes it possible to obtain infinite order unitaries with a sequence of length 3 on $spin_0(4)$ with only 2 ancillas, such as
$$(s_{a_1,a_2},t_{a_2,c_2},t_{a_2,c_1},t_{a_2,c_4},s_{a_1,a_2}).$$
This corresponds to a unitary with eigenvalues $\frac{3\sqrt{3}}{2\sqrt{7}} \pm \frac{1}{2\sqrt{7}} i$, even though $t_{a_2,c_1}$ leaks information (gives an invertible but nonunitary map). However, computer search does not deliver any interesting unitary (i.e. not a signed permutation) given higher numbers of spins and ancillas.
 
\section{Statistics of amplitudes from random $s/t$ sequences}\label{statsofampsappendix}
In this appendix, we provide informal arguments independent from the main text, supplemented with numerical results which support the inference that the $s/t$ model is not classically simulable.  Indeed, we believe that it is not possible to efficiently classically compute the probability of a single measurement being $s$ or $t$ with inverse polynomial error. We present a series of numerical experiments, and then finally use these to propose an average case sampling problem in \cref{STSampleproblem}.

To this end, we examine the statistics of amplitudes from random $s/t$ sequences. Suppose we start with a singlet dimerization on $2N$ spins and pick random pairs and measure whether they are singlet or triplet without postselection. After many rounds of measurement, we pick one final random pair and consider the probability that the pair is a singlet. If this probability is exponentially close to $\frac{1}{4}$, this is not very useful as it can be classically simulated trivially by simply guessing that
the value is $\frac{1}{4}$. On the other hand, if this probability has some fluctuations of magnitude $\frac{1}{poly}$, then it may be hard to simulate classically, and allow one to devise an experiment similar to the Google ``quantum supremacy'' experiment \cite{arute2019quantum}.

However, the naive experiment discussed above fails to deliver any interesting result. To understand this, consider the random sampling performed in \cite{arute2019quantum} using random circuits and individual spin measurements instead of our $s/t$ measurements.  There, the standard approximation is to guess that the random circuit produces a random state (Haar random, as an approximation). This random state is well described by choosing the amplitude of every single computational basis state from a Gaussian distribution.  So, if one measures a single spin on the final state, the distribution is exponentially close to flat ($\frac{1}{2},\frac{1}{2}$ probability). The reason is, for $2N$ spins, there are are $2^{2N-1}$ basis states where the selected spin is up and $2^{2N-1}$ where it is down, and adding all those random numbers gives very small fluctuations that cancel each other out to give a probability exponentially close to ($\frac{1}{2},\frac{1}{2}$). The same holds if one measures $2,3,4, \ldots, O(1)$ spins; their joint probability distribution is still very close to flat.  So, we expect a similar behavior that the simple protocol of the above paragraph will not be too interesting, and we give a more interesting protocol later.

We can see this  flatness in practice using a simulation of a sequence of $1000$ measurements on $18$ spins (\cref{fig:stprobs}) in this protocol, followed by computing the probability of triplet on each pair, of which there are $153$. At first glance, there seem to be interesting fluctuations with high magnitude, but what we actually see, as explained later, is a rather boring deviation from exponential flatness. As expected, most of the graph is hovering around $\frac{3}{4}$; but notice that there some triplets and  singlets (values $1$ and $0$ on the graph respectively). We present an informal argument regarding these (extreme) fluctuations, why they exist even for very long runs like $1000$ or more, and why they are not an indication of quantum supremacy.

\begin{figure}[h]
    \centering
    \includegraphics[scale = 0.5]{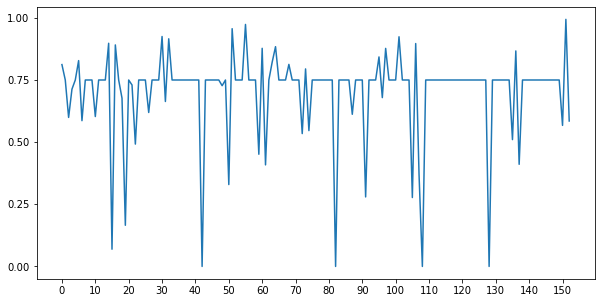}
    \caption{Probability (y-axis) of being a triplet pair for the $\binom{18}{2}=153$ pairs (x-axis) of 18 spins after 1000 random measurements performed on one randomly chosen sequence of 1000 pairs from a singlet dimerization.}
    \label{fig:stprobs}
\end{figure}

Consider a gas of spins and draw a straight line between two spins when they form a singlet, and a wiggly line for a triplet, and otherwise no line. Suppose the gas reaches a state where there are no lines. Then, as we select a pair and measure $s/t$, some line (straight or wiggly) forms. By selecting another random pair, likely disjoint from the first, another line forms. By the time about $(1-1/\sqrt{2})2N$ of the spins are paired off by lines of one kind or the other, it becomes equally probable that the next randomly selected pair will touch/not touch a paired spin and that the measurement will break the preexisting line and replace it with another. We should note that breaking a line is the worst case, corresponding to the singlet measurement outcome, and we are assuming such an outcome for every measurement, giving us the ratio $(1-1/\sqrt{2})$ (if triplet is the outcome, the previous line may remain, and we get more lines as a result). On this ``fresh'' subpopulation, the process, even if it is at step 1001, looks like it is starting from scratch. More generally, there is also an $O(1)$ fraction subpopulation whose parents, or grandparents were all fresh - part of a singlet sub-dimerization, which could provide an explanation on the rest of the fluctuations in \cref{fig:stprobs}. This constant ``refreshment'' prevents exponential flatness, but should not be mistaken for evidence of quantum supremacy. We noted how sampling an $O(1)$ number of spins would not break the exponential flatness (other than the deviations described in the above paragraph). However, once we sample almost all the spins, then amplitudes can get interesting. For example, after sampling $2N-k$ spins, for some $k$, then the deviation of the next spin from flatness is only exponentially small in $k$. So, the last few measurements are not close to flat.

To implement such a sampling, imagine a protocol that ``detangles'' a state: take a pair of spins at random, and measure. If the answer is a singlet, then remove that pair from the collection of spin, and continue the algorithm on the remaining spins.  If it is not a singlet, then take another random pair (potentially not disjoint from the previous pair found to be a triplet). Eventually, this fully detangles the state, and at some point only a few spins are left entangled, and finally none at all.

As one detangles, the probabilities start exponentially close to flat but become less and less flat. When there are only $4$ spins left, the last measurements are far from flat. Notice if we stop at $2$ spins, the result is already determined: a singlet, as the total spin of our original state had to be zero.

Let us call those spins $1,2,3,4$. Then, we might guess that after many measurements, we have a ``random'' spin 0 state on $1,2,3,4$. So, we want to compute: given a random singlet on $4$ spins, what is the distribution of the probability that a given pair, say $1,2$, will be in a singlet?

The space $spin_0(4)$ is two-dimensional, and is spanned by an orthogonal basis which amounts to symmetrizing or anti-symmetrizing on the pair $1,2$, implying there is one state $\ket{S}$ which is a singlet on $1,2$, and one state $\ket{T}$ which is a triplet on $1,2$. We ask: For a randomly chosen state, what is the distribution of probability on the $\ket{S}$ outcome of the measurement?

A simple guess would be a random (real) state of the form $\cos(\theta)\ket{S} + \sin(\theta) \ket{T}$, with $\theta$ picked uniformly on the circle. Since $d \theta = -\frac{d \cos(\theta)}{\sin(\theta)}$, a uniform measure on $\theta$ is a measure proportional to $1/|\sin(\theta)|$ on the amplitude $\cos(\theta)$ (\cref{fig:costheta}) and as we would measure probabilities, we should consider $\cos(\theta)^2$. However, due to the subpopulation refreshment phenomena described earlier, we would expect to see certain number of spikes on the distribution over probabilities $1,\frac{3}{4},\frac{1}{4},0$ and possibly other numbers. To sum it up, we expect a continuous measure which hopefully delivers arbitrary looking numbers, plus some inevitable discrete terms.

\begin{figure}[h]
    \centering
    \includegraphics[scale = 0.65]{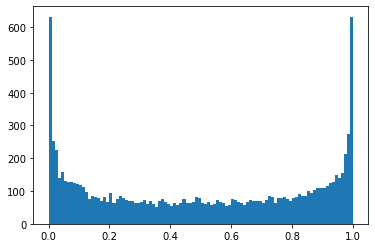}
    \caption{Histogram distribution of $\cos(\theta)^2$ for $\theta$ picked uniformly from the circle using 10000 samples and 100 equal bins.}
    \label{fig:costheta}
\end{figure}

We illustrate the probability of $\ket{S}$ for a simulation of 10 runs of the algorithm on 1000 random sequences of 1000 measurements on 14, 16, and 18 spins (\cref{fig:STprobs}). We see some spikes at $1,\frac{3}{4},0$ for $\ket{S}$ and $1,\frac{1}{4},0$ for $\ket{T}$. Overall, we see a continuous measure plus some discrete terms, and the continuous measure delivers the arbitrary looking numbers, apparently quite distinct from $\cos(\theta)^2$. This may indicate a difficulty for classical simulation, and a route to quantum supremacy.

\begin{figure}[h]
    \centering
    \includegraphics[scale = 0.5]{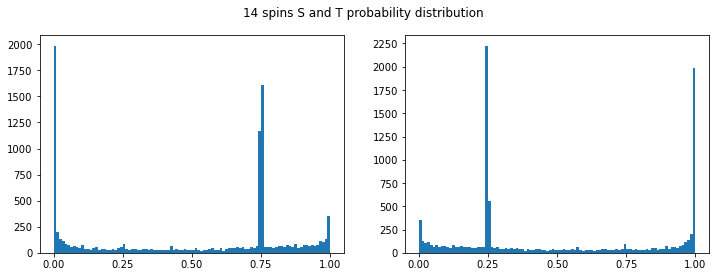}
    \includegraphics[scale = 0.5]{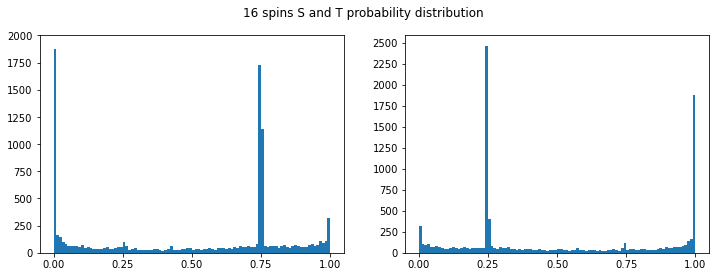}
    \includegraphics[scale = 0.5]{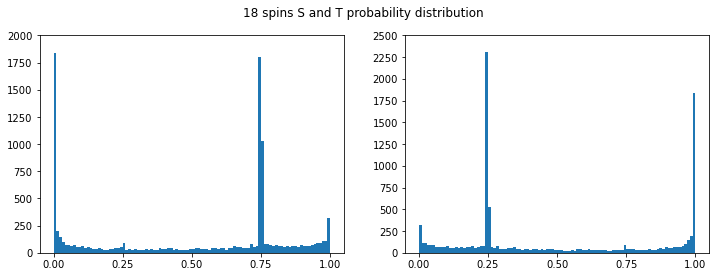}
    \caption{Left/Right histogram shows the distribution of the probability of $\ket{S}$/$\ket{T}$ being the outcome of the measurement on the last 4 spins after implementing the detangling process. For each value of $2N \in \{14,16,18\}$, the 10000 datapoints are from 10 runs of the algorithm on 1000 random sequences of 1000 measurements.}
    \label{fig:STprobs}
\end{figure}

Thus, we propose the following random sampling problem, \textit{\textbf{STSample}}, which can be viewed as an average case sampling problem:
\begin{problem}\label{STSampleproblem}[\textbf{STSample}]
Considering preparing $2N$ spins in a singlet.  Perform $\Theta(poly(N))$ rounds of $s/t$ measurements on random pairs.  Then, apply the detangling protocol, until $4$ spins are left, and then finally perform one last $s/t$ measurement.  Given a choice of random pairs, this defines a probability distribution over bit strings (where each bit being $0$ or $1$ denotes whether a measurement was $s$ or $t$).
Suppose the pairs are chosen randomly and then a bit string is chosen randomly from this distribution, and a classical computer is given as input the list of pairs and given all but the last bit of the bit string.  Can it, with probability $1-o(1)$, efficiently compute the probability distribution of the last bit with at most inverse polynomial error?
\end{problem}

\section{A numerical/representation theory approach to PostSTP = PostBQP}
We present another alternative approach, this time involving representation theory \footnote{We thank Monica Vazirani for her tutoring on the representation theory of the symmetric group}, to proving PostSTP = PostBQP. The ``theory'' is easy, but numerical difficulties have prevented us from making this approach rigorous.

Given $2N=8$ spins in a spin-0 subspace and $2M$ spins as ancillas initialized in a singlet dimerization, taking out dilation, we aim to prove density in the group $SL(14,\mathbb{R})$, where $14 = \dim spin_0(8)$. Then we define our logical qubit as the two dimensional $spin_0(4)$. We may think $spin_0(8)$ as containing two $spin_0(4)$ qubits: $spin_0(1,2,3,4)$ ad $spin_0(5,6,7,8)$, and 10 additional (unwanted) directions. This density result would give the entangling gates for the logical qubits, as the $SO(4)$ gates are included in the larger $SL(14,\mathbb{R})$ group.

We know that by simply using singlet projections, we can permute the spins. Thus, we have an action of the symmetric group $S_8$ is given on $spin_0(8)$. This \textit{irreducible} representation corresponds to the Young tableau with partition $(4,4)$, as also shown in \cite{jordan2010permutational}. An action is induced (by conjugation) on the endomorphism algebra $\text{End}(spin_0(8)) \cong \text{M}(14,\mathbb{R})$, and similarly on its Lie algebra $\mathcal{A}$ which is isomorphic to $\text{M}(14,\mathbb{R})$. Note that we mention the Lie algebra as we will take logs of operators later on.

Using the mathematical package SAGE, $\mathcal{A}$ decomposes, in Young diagram notation, to $(8) \oplus  (6,2) \oplus (5,1,1,1) \oplus (4,4)  \oplus (4,2,2) \oplus (3,3,1,1)\oplus  (2,2,2,2)$ under the aforementioned action. Notice $(8)$ is the trivial irrep containing the identity. Henceforth, we replace $O$ by normalized $\frac{O}{\det(O)}$. This projects out the $(8)$ summand leaving $\mathcal{A}_0 \cong \mathfrak{sl}(14,\mathbb{R})$.

We thus consider only these normalized operators with determinant 1 in $\text{SL}(14,\mathbb{R})$ given as sequences of $s/t$. To get a dense set, ideally the set of gates should be closed under inverse, allowing a variant of the Solovay-Kitaev algorithm \cite{kitaev1997quantum} to be implemented. But how can we approximate the inverse of a sequence of $s/t$? If the operator is not orthogonal, we can not simply take the conjugate sequence, which is the sequence applied in reverse order.

Take the (smallest) log of one of these operators (assuming it exists), say $\log(O)$. Then consider the orbits of $\log(O)$, which are also the logs of the orbit of $O$. Adding them up produces an element $X = \sum_{\rho \in S_8} \rho(\log(O))$ in $\mathcal{A}_0$ which is invariant under $S_8$. Since $(8)$ has been projected out, $0$ is the only fixed vector in $\mathcal{A}_0$. Thus we have,
$$X = 0 \implies -\log(O) = \sum_{id \neq \rho \in S_8} \rho(\log(O)).$$
Exponentiation and using Baker-Campbell-Hausdorff formula for the error analysis now allow us to approximate the inverse of $O$.

In the numerical studies, we have normalized the operators by the determinant, thus eliminating the trivial irrep $(8)$, leaving six irreps spanning the Lie algebra $\mathcal{A}_0 \cong \mathfrak{sl}(14,\mathbb{R})$. To achieve density in $SL(14,\mathbb{R})$, we need to search for:
\begin{lemma}
Density in $\text{SL}(14,\mathbb{R})$ follows given a set of operators ``arbitrarily'' close to identity (e.g. in the operator norm) with logs having nonzero component in each of the six irreps.
\end{lemma}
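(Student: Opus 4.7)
The plan is to establish density in three logical steps, all of which proceed from the hypothesis by standard Lie-theoretic reasoning once the $S_8$-decomposition is in hand.

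First, I would show that the $S_8$-orbit of the log $X := \log(O)$ of a single hypothesized operator already linearly spans $\mathfrak{sl}(14,\mathbb{R})$. Because $\mathcal{A}_0$ decomposes as a direct sum of six pairwise non-isomorphic (hence multiplicity-free) irreducible $S_8$-representations, and $X$ has nonzero component in each summand by assumption, the linear span of $\{\rho(X):\rho\in S_8\}$ is an $S_8$-invariant subspace meeting every irreducible summand nontrivially; by Schur irreducibility it contains each summand, and therefore equals all of $\mathcal{A}_0\cong\mathfrak{sl}(14,\mathbb{R})$. Since permutations of the eight spins are realizable (up to sign) by singlet projections, each conjugate $\rho(\log O)$ is itself $\log(\rho(O))$ of a concrete $s/t$ sequence.

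Second, I would approximate inverses. The projection of the trivial irrep out of $\mathcal{A}_0$ yields $\sum_{\rho\in S_8}\rho(X)=0$, so $-X=\sum_{\rho\neq\mathrm{id}}\rho(X)$, and by Lie--Trotter
$$ O^{-1} \;=\; \exp(-X) \;=\; \prod_{\rho\neq\mathrm{id}}\exp(\rho(X)) \;+\; O(\|X\|^2), $$
with the error suppressible to $O(\|X\|^2/k)$ by subdividing $X$ into $k$ smaller pieces (using $O^{1/k}$, available because we can choose the hypothesized operator as close to the identity as we wish). Thus approximate inverses at any desired precision are within reach.

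Third, I would bootstrap to density via BCH and Solovay--Kitaev. Pick a finite spanning set of $\mathfrak{sl}(14,\mathbb{R})$ from among the conjugates $\rho(X)$; for each basis element $Y$ and small $t>0$, approximate $\exp(tY)$ by composing integer powers of $\rho(O)$ and their approximate inverses via Trotter, with errors controlled by a suitably small $\|X\|$. Commutators $\exp(tA)\exp(sB)\exp(-tA)\exp(-sB) = \exp(ts[A,B])+O((|t|+|s|)^3)$ then generate Lie brackets, and a Solovay--Kitaev iteration promotes this to efficient approximation of every element in a neighborhood of the identity; connectedness of $SL(14,\mathbb{R})$ yields density.

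The main obstacle is bookkeeping the three nested layers of error (Trotter for inverses, integer-power rescaling for real coefficients, and BCH commutator composition) so that the initial $\|X\|$ can be chosen small enough to make every subsequent estimate converge with only polynomial overhead. In principle the hypothesis ``arbitrarily close to identity'' makes all these estimates work uniformly; in practice the difficulty is purely empirical, namely exhibiting explicit short $s/t$ sequences whose normalized logs actually have nonzero component in each of the six irreps --- which, as the authors note, is what their numerical search has not been able to certify.
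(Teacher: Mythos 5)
Your proposal is correct and takes essentially the same route as the paper: span the Lie algebra via the $S_8$-orbit of $\log O$ using Schur's lemma and multiplicity-freeness of the six irreps, approximate inverses by exploiting $\sum_{\rho} \rho(\log O)=0$, and then bootstrap to density by BCH/commutator and Solovay--Kitaev-style iteration from operators close to the identity. The one place you are slightly more careful than the paper is in explicitly noting that the six irreps are pairwise non-isomorphic (so the decomposition is multiplicity-free), which is what makes the invariant-subspace argument close; otherwise the arguments coincide in structure and in what they leave to error bookkeeping.
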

\begin{proof}
Given such an operator $O$, define $\mathcal{V}_O := \text{span}\left<\{\rho(\log(O))\}_{\rho \in S_8}\right>$. Clearly $\mathcal{V}_O$ is invariant and since it has nonzero projection onto each of the six irreps, it must be the whole Lie algebra $\mathfrak{sl}(14,\mathbb{R})$.

However, density must be achieved by \textit{positive} integral linear combinations of $\{\rho(\log(O))\}_{\rho \in S_8}$, as we only have access to positive integer multiplies of $\rho(\log(O))$ by composing their corresponding $s/t$ sequences. But we previously showed how to approximate inverses (negative multiples) with positive linear combinations, thus giving access to \textit{integral} linear combinations. Having operators arbitrarily close to $id$, meaning (the smallest) $\log(O)$ very close to zero (in, say, the operator norm), serves the purpose of establishing density with just positive integral linear combinations.
\end{proof}
Moreover, it is not necessary to have operators \textit{arbitrarily} close to identity, as the threshold theorem states; see also the last argument in \cref{postbqpsubsec} regarding the issue of error. Nevertheless, the threshold error required in this case is very small, especially as we build the inverse of an operator by taking $8!-1$ other operators, which introduces even more errors. A very rough estimate would be $10^{-10}$, but it is likely much less than that, as the usual error correction threshold is around $10^{-3}$ and $8!-1 \sim 10^{4}$ terms are used in a composition to compute the inverse of any element. Thus, numerical assisted proof by searching sequences of $s/t$ is out of reach. However, a search can provide evidence of the existence of such operators, by showing one can find operators closer and closer to identity as the length of the $s/t$ sequence increases. We have found such examples in the smaller case of 4 computational spins.

We find two operators 
$$O_1 = \begin{pmatrix} \frac{\sqrt{15}}{2\sqrt{7}} & \frac{1}{2\sqrt{35}} \\ -\frac{\sqrt{5}}{2\sqrt{7}} & \frac{9\sqrt{3}}{2\sqrt{35}} \end{pmatrix}, \ \ \ O_2 = \begin{pmatrix} 0.7 & -\frac{\sqrt{3}}{10} \\ -\frac{1}{2\sqrt{3}} & 1.5 \end{pmatrix},$$
corresponding to
\begin{align*}
   O_1 \leftarrow (s_{a_1,a_2},t_{a_2,c_2},t_{c_1,c_2},t_{a_1,c_1},t_{a_1,c_4}, t_{c_3,c_4}, t_{c_2,c_3}, t_{c_1,c_2}, t_{a_1,c_1},s_{a_1,a_2}), \\
O_2 \leftarrow (s_{a_1,a_2}, t_{a_2,c_2},t_{c_1,c_2},t_{c_1,c_3},t_{a_1,c_3},t_{a_1,a_2},t_{a_2,c_2}, t_{c_1,c_2},t_{a_1,c_1}  ,s_{a_1,a_2}),
\end{align*}
both satisfying $\prod_{\rho \in S_4} \rho(O_i) = id$ where $\rho$ acts by conjugation. Therefore, we have exact access to the inverse of these operators and can compute their commutators. $O_1$ and $O_2$ are obtained by a sequence of length $8$ of triplet measurements on $4$ computational spins and $2$ ancillas. It should be noted that these operators are not orthogonal, and their orbits $\{\rho(O_i)\}_{\rho \in S_4}$ do not pairwise commute. Hence the identity $\prod_{\rho \in S_4} \rho(O_i) = id$ holds for \textit{non}obvious reasons. Following the technique in Solovay-Kitaev theorem \cite{kitaev1997quantum}, by taking repeatedly the commutators $[O_1,[O_1, \ldots,[O_1,O_2]]\ldots ]$, we can find operators closer to identity (\cref{fig:commtoid}).
\begin{figure}[h]
    \centering
    \includegraphics[scale=0.4125]{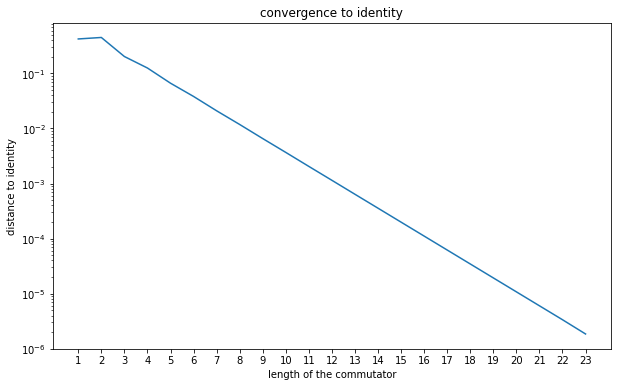}
    \caption{Distance of $[O_1,[O_1, \ldots,[O_1,O_2]]\ldots ]$ ($M$ times) to identity. The x-axis is the length of the commutator which is $M$ and the y-axis has been log-scaled to show the exponential decay of distance to identity.}
    \label{fig:commtoid}
\end{figure}

\end{document}